\newtheorem{thm}{Theorem}
\newtheorem{lem}{Lemma}
\newtheorem{defi}{Definition}
\begin{document}
\title{On the Delay-Throughput Tradeoff in Distributed Wireless Networks\\
\thanks{$^{*}$ This work is financially supported by Nortel Networks and the corresponding matching funds by the Natural Sciences
and Engineering Research Council of Canada (NSERC), and Ontario Centers of Excellence (OCE).}
\thanks{$^{*}$
The material in this paper was presented in part at the IEEE International Symposium on Information Theory (ISIT),
Nice, France, June 24-29, 2007 \cite{JamshidISIT2}.}}

\author{\small Jamshid Abouei, Alireza Bayesteh, and Amir K. Khandani \\
\small Coding and Signal Transmission Laboratory (www.cst.uwaterloo.ca)\\
Department of Electrical and Computer Engineering, University of Waterloo\\
Waterloo, Ontario, Canada, N2L 3G1 \\
Tel: 519-884-8552, Fax: 519-888-4338\\
Emails: \{jabouei, alireza, khandani\}@cst.uwaterloo.ca}

\maketitle

\markboth{\small Submitted to IEEE Transactions on Information
Theory, October 2009}{}

\begin{abstract}
This paper deals with the delay-throughput analysis of a single-hop
wireless network with $n$ transmitter/receiver pairs. All channels
are assumed to be block Rayleigh fading with shadowing, described by
parameters $(\alpha,\varpi)$, where $\alpha$ denotes the probability
of shadowing and $\varpi$ represents the average cross-link gains.
The analysis relies on the distributed \textit{on-off power
allocation strategy} (i.e., links with a direct channel gain above a
certain threshold transmit at full power and the rest remain silent)
for the deterministic and stochastic packet arrival processes. It is
also assumed that each transmitter has a buffer size of one packet
and dropping occurs once a packet arrives in the buffer while the
previous packet has not been served. In the first part of the paper,
we define a new notion of performance in the network, called
\textit{effective throughput}, which captures the effect of arrival
process in the network throughput, and maximize it for different
cases of packet arrival process. It is proved that the effective
throughput of the network asymptotically scales as $\frac{\log
n}{\hat{\alpha}}$, with $\hat{\alpha} \triangleq \alpha \varpi$,
regardless of the packet arrival process. In the second part of the
paper, we present the delay characteristics of the underlying
network in terms of the packet dropping probability. We derive the
sufficient conditions in the asymptotic case of $n \to \infty$ such
that the packet dropping probability tend to zero, while achieving
the maximum effective throughput of the network. Finally, we study
the trade-off between the effective throughput, delay, and packet
dropping probability of the network for different packet arrival
processes. In particular, we determine how much degradation will be
enforced in the throughput by introducing the aforementioned
constraints.
\end{abstract}

\begin{center}
\vskip 0.3cm
  \centering{\bf{Index Terms}}

  \centering{\small Throughput maximization, delay-throughput tradeoff, dropping probability, Poisson arrival process.}
\end{center}
\vskip 0.8cm

\section{Introduction}

As the demand for higher data rates increases, effective resource
allocation emerges as the primary issue in wireless networks in
order to satisfy Quality of Service (QoS) requirements. Central to
the study of resource allocation schemes, the distributed power
control algorithms for maximizing the network throughput have
attracted significant research attention \cite{FoschiniITVT1193,
YatesJSAC0995, SaraydarITC0202, HuangJSAC2006, EtkinTse2007,
JindalAllerton07}. Moreover, achieving a low transmission delay is
an important QoS requirement in wireless networks
\cite{Jung_Shah2007}. In particular, for buffer-limited users with
real-time services (e.g., interactive games, live sport videos,
etc), too much delay results in dropping some packets. Therefore,
the main challenge in wireless networks with real-time services is
to utilize an efficient power allocation scheme such that the delay
is minimized, while achieving a high throughput.

The throughput maximization problem in cellular and multihop
wireless networks has been extensively studied in
\cite{GuptaITIT2000, GrossglauserIACM0802, KulkaraniITIT0604,
XieITIT0504, HassibiITIT0706}. In these works, delay analysis is not
considered. However, it is shown that the high throughput is
achieved at the cost of a large delay \cite{ElgamalITIT0606}. This
problem has motivated the researchers to study the relation between
the delay characteristics and the throughput in wireless networks
\cite{LeAlfaITWC1106, BetteshPIMRC98, BansalINFOCOM2003,
ToumpisINFOCOM04}. In particular, in most recent literature
\cite{GopalaWNCMC2005, ElgamalITIT0606, SharmaITIT0606,
NeelyITIT0605, SahrifITWC0907, AlirezaITIT2007, ComaniciuITWC0806,
WangISIT2008, WalshISIT2008}, the tradeoffs between delay and
throughput have been investigated as a key measure of the network's
performance. The first studies on achieving a high throughput along
with a low-delay in ad hoc wireless networks are framed in
\cite{BansalINFOCOM2003} and \cite{ToumpisINFOCOM04}. This line of
work is further expanded in \cite{ElgamalITIT0606, SharmaITIT0606}
and \cite{NeelyITIT0605} by using different mobility models. El
Gamal \textit{et al.} \cite{ElgamalITIT0606} analyze the optimal
delay-throughput scaling for some wireless network topologies. For a
static random network with $n$ nodes, they prove that the optimal
tradeoff between throughput $T_{n}$ and delay $D_{n}$ is given by
$D_{n}=\Theta (nT_{n})$. Reference \cite{ElgamalITIT0606} also shows
that the same result is achieved in random mobile networks, when
$T_{n}=O(1/\sqrt{n\log n})$. Neely and Modiano \cite{NeelyITIT0605}
consider the delay-throughput tradeoff for mobile ad hoc networks
under the assumption of redundant packet transmission through
multiple paths. Sharif and Hassibi \cite{SahrifITWC0907} analyze the
delay characteristics and the throughput in a broadcast channel.
They propose an algorithm to reduce the delay without too much
degradation in the throughput. This line of work is further extended
in \cite{AlirezaITIT2007} by demonstrating that it is possible to
achieve the maximum throughput and short-term fairness
simultaneously in a large-scale broadcast network.

In \cite{JamshidITIT2008}, we addressed the throughput maximization
of a distributed single-hop wireless network with $K$ links, where
the links are partitioned into a fixed number ($M$) of clusters each
operating in a subchannel with bandwidth $\frac{W}{M}$. We proposed
a distributed and non-iterative power allocation strategy, where the
objective for each user is to maximize its best estimate (based on
its local information, i.e., direct channel gain) of the average
sum-rate of the network. Under the block Rayleigh fading channel
model with shadowing effect, it is established that the average
sum-rate in the network scales at most as $\Theta (\log K)$ in the
asymptotic case of $K \to \infty$. This order is achievable by the
distributed \textit{threshold-based on-off scheme} (i.e., links with
a direct channel gain above certain threshold $\tau_{n}$ transmit at
full power and the rest remain silent). In addition, in the strong
interference scenario, the on-off power allocation scheme is shown
to be the optimal strategy. Moreover, the optimum threshold level
that achieves the maximum average sum-rate of the network is
obtained as $\tau_{n}=\log n-2\log \log n +O(1)$, where
$n=\frac{K}{M}$ is the number of links in each cluster. We also
optimized the average network's throughput in terms of the number of
the clusters, $M$. It is proved that the maximum average sum-rate of
the network, assuming on-off power allocation scheme, is achieved at
$M=1$.  However, \cite{JamshidITIT2008} only focuses on the network
throughput and other issues (like delay and packet dropping
probability) were not addressed in this work.

In this paper, we follow the distributed single-hop wireless network
model proposed in \cite{JamshidITIT2008} with $M=1$ (which is the
case with the maximum throughput) and address the delay-throughput
tradeoff of the network. The channels are assumed to be
\textit{block Rayleigh fading with shadowing} (the same model as in
\cite{JamshidITIT2008}), where the transmission block is assumed to
be equal to the fading block (which is assumed to be equal for all
links). Moreover, the links are assumed to be synchronous. The
assumption of block Rayleigh fading with synchronous users is used
in many works in the literature (like \cite{marzetta} for the
point-to-point scenario, \cite{shlomo} for the multiple-access
channel,  and \cite{SahrifITWC0907} and \cite{AlirezaITIT2007} for
the broadcast scenario). We consider a buffer-limited network, in
which the users have a buffer size of one packet. This assumption
introduces \textit{dropping event} in the network, which is defined
as the event when a packet is arrived in the buffer while the
previous packet has not been served yet. Although the assumption of
one packet buffer size is harsh for many practical applications, it
simplifies the analysis while giving a good insight about the worst
case performance in the network. Noting the optimality of on-off
power allocation scheme in terms of achieving the maximum order of
the sum-rate throughput \cite{JamshidITIT2008}, we use it in this
work. Therefore, for any link, if the direct channel is above a
pre-determined threshold and there is any packet in the buffer, the
transmitter sends that packet during a transmission block with full
power and if not, remains silent.

  In the first part, we define a new notion of throughput,
called \textit{effective throughput}, which describes the
\textit{actual} amount of data transmitted through each links. This
notion captures the effect of arrival process by taking into account
the \textit{full buffer probability}. We compute the optimum
threshold level $\tau_n$, and the corresponding maximum effective
throughput of the network, for each packet arrival process. It is
proved that the effective throughput of the network scales as
$\frac{\log n}{\hat{\alpha}}$, with $\hat{\alpha} \triangleq \alpha
\varpi$, regarding the packet arrival process. This throughput
scaling is exactly the same as what we had derived in
\cite{JamshidITIT2008}, i.e., the case of backlogged users.
Moreover, we show that the maximum throughput is achieved in the
\textit{strong interference scenario}, in which the interference
term dominates the noise. As an interesting consequence, the results
of this section are valid even without the assumption of
synchronization between the users or equality of their fading
coherence time (fading blocks).

In the second part, we present the delay characteristics of the
underlying network in terms of the packet dropping probability for
deterministic and stochastic packet arrival processes. We derive the
sufficient conditions in the asymptotic case of $n \to \infty$ such
that the packet dropping probability of the links tends to zero,
while achieving the maximum effective throughput of the network,
asymptotically. The importance of this result is showing the fact
that the loss in the network performance due to the limited buffer
size can be made negligible in the asymptotic regime of $n \to
\infty$. In the subsequent section, we study the tradeoff between
the effective throughput of the network and other performance
measures, i.e., packet dropping probability and delay for different
arrival processes. In particular, we determine how much degradation
will be enforced in the throughput by introducing the aforementioned
constraints, and how much this degradation depends on the arrival
process. The setup in this paper is quite different from that of
with the on-off Bernoulli scheme in \cite{Neelyallerton06}. In fact,
we utilize a distributed approach using local information, i.e.,
direct channel gains, while \cite{Neelyallerton06} relies on a
central controller which studies the channel conditions of all the
links and decides accordingly. Furthermore, we consider a
homogeneous network model without path loss. This differs from the
geometric models considered in \cite{ElgamalITIT0606,
SharmaITIT0606} and \cite{NeelyITIT0605}, which are based on the
distance between the source and the destination (i.e., power
decay-versus-distance law).

The rest of the paper is organized as follows. In Section
\ref{model}, the network model and objectives are described. The
throughput maximization of the underlying network is presented in
Section \ref{analysis}. The delay characteristics in terms of the
packet dropping probability are analyzed in Section \ref{secdelay1}.
Section \ref{TDT} establishes the tradeoff between the throughput,
delay, and packet dropping probability in the underlying network.
Simulation results are presented in in section \ref{numerical_Ch4}.
Finally, in Section \ref{conclusion1}, an overview of the results
and conclusions are presented.

\textit{Notations:} For any functions $f(n)$ and $g(n)$
\cite{KnuthACM67}:
\begin{itemize}
\item [$\bullet$] $f(n)=O(g(n))$ means that $\lim_{n \to \infty} \Big \vert \frac{f(n)}{g(n)} \Big \vert < \infty$.
\item [$\bullet$] $f(n)=\mathit{o}(g(n))$ means that $\lim_{n \to \infty} \Big \vert \frac{f(n)}{g(n)} \Big \vert =0$.
\item [$\bullet$] $f(n)=\omega(g(n))$ means that $\lim_{n \to \infty} \frac{f(n)}{g(n)} = \infty$.
\item [$\bullet$] $f(n)=\Omega(g(n))$ means that $\lim_{n \to \infty}  \frac{f(n)}{g(n)}  > 0$.
\item [$\bullet$] $f(n)=\Theta (g(n))$ means that $\lim_{n \to \infty} \frac{f(n)}{g(n)}=c$, where $0<c<\infty$.
\item [$\bullet$] $f(n) \sim g(n)$ means that $\lim_{n \to \infty} \frac{f(n)}{g(n)}=1$.
\item [$\bullet$] $f(n) \approx g(n)$ means that $f(n)$ is approximately equal to $g(n)$, i.e., if
we replace $f(n)$ by $g(n)$ in the equations, the results still hold.
\end{itemize}

Throughout the paper, we use $\log(.)$ as the natural logarithm
function and $\mathbb{N}_{n}$ for representing the set $\lbrace 1,2,
\cdots, n \rbrace$. Also, $\mathbb{E}[.]$ represents the expectation
operator, and $\mathbb{P} \{.\}$ denotes the probability of the
given event.

\section{Network Model and Problem Description}\label{model}

\subsection{Network Model}

In this work, we consider a distributed single-hop wireless network,
in which $n$ pairs of nodes\footnote{The term ``pair" is used to
describe the transmitter and the related receiver, while the term
``user" is used only for the transmitter.}, indexed by
$\{1,...,n\}$, are located within the network area (Fig. \ref{fig:
model}). We assume the number of links, $n$, is known information
for the users. All the nodes in the network are assumed to have a
single antenna. Also, it is assumed that all the transmissions occur
over the same bandwidth. In addition, we assume that each receiver
knows its direct channel gain with the corresponding transmitter, as
well as the interference power imposed by other users. However, each
transmitter is assumed to be only aware of the direct channel gain
to its corresponding receiver. The power of Additive White Gaussian
Noise (AWGN) at each receiver is assumed to be $N_{0}$.
\begin{figure}[t]
\centerline{\psfig{figure=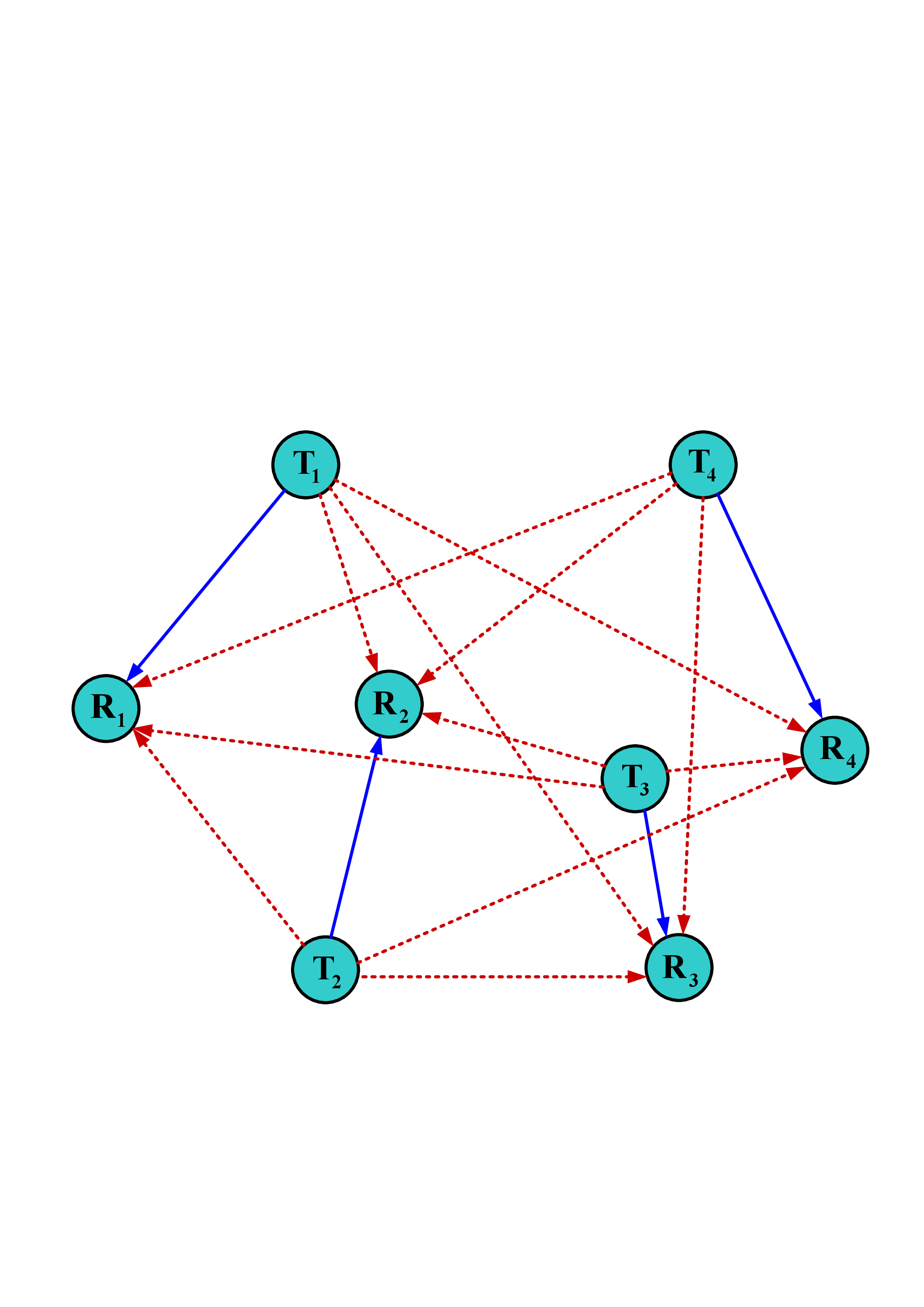,width=4.5in}}
\caption{A distributed single-hop wireless network with $n=4$.}
\label{fig: model}
\end{figure}

We assume that the time axis is divided into slots with the duration
of one transmission block, which is defined as the unit of time. The
channel model is assumed to be Rayleigh flat-fading with the
shadowing effect. The channel gain\footnote{In this paper,
\textit{channel gain} is defined as the square magnitude of the
\textit{channel coefficient}.} between transmitter $j$ and receiver
$i$ at time slot $t$ is represented by the random variable
$\mathcal{L}^{(t)}_{ji}$ \footnote{In the sequel, we use the
superscript $(t)$ for some events to show that the events occur in
time slot $t$.}. For $j=i$, the \textit{direct channel gain} is
defined as $\mathcal{L}^{(t)}_{ji} \triangleq h^{(t)}_{ii}$, where
$h^{(t)}_{ii}$ is exponentially distributed with unit mean (and unit
variance). For $j \neq i$, the \textit{cross channel gains} are
defined based on a shadowing model as follows\footnote{For more
details, the reader is referred to \cite{AbdiVTC1999} and
\cite{AbdiVTC2001} and references therein.}:
\begin{eqnarray}\label{eqn: crosschannel}
    \mathcal{L}^{(t)}_{ji} \triangleq
\left\{\begin{array}{ll}
\beta^{(t)}_{ji} h^{(t)}_{ji}   ,    & \textrm{with probability}~~~~\alpha \\
0 ,  &  \textrm{with probability}~~ 1-\alpha,
\end{array} \right.
\end{eqnarray}
where $h^{(t)}_{ji}$s have the same distribution as $h^{(t)}_{ii}$s,
$0 \leq \alpha \leq 1$ is a fixed parameter, and the random variable
$\beta^{(t)}_{ji}$, referred to as the \textit{shadowing factor}, is
independent of $h^{(t)}_{ji}$ and satisfies the following
conditions:
\begin{itemize}
\item $\beta_{min} \leq \beta^{(t)}_{ji} \leq \beta_{max}$, where $\beta_{min} >0$ and $\beta_{max}$ is finite,
\item $\mathbb{E} \big[\beta^{(t)}_{ji}\big] \triangleq \varpi \leq 1$.
\end{itemize}
All the channels in the network are assumed to be quasi-static block
fading, i.e., the channel gains remain constant during one block and
change independently from block to block. In other words,
$\mathcal{L}^{(t)}_{ji}$ is independent of $\mathcal{L}^{(t')}_{ji}$
for $t \neq t'$. Moreover, the fading block of all channels are
assumed to be equal to each other and this value is equal to the
duration of the transmission block for all users. This model is also
used in  \cite{SahrifITWC0907} and \cite{AlirezaITIT2007}. Also,
users are assumed to be synchronous to each other. However, as we
will see later, the results of the paper are still valid even in the
cases that the users are not synchronous or the fading block
(coherence time) of the channels are not equal.

\subsection{On-Off Power Allocation Strategy}\label{On-Off}
In \cite{JamshidITIT2008}, we have shown that a distributed scheme,
called \textit{threshold-based on-off scheme}, achieves the maximum
order of the  sum-rate throughput in a single-hop wireless network
with $n$ links,  under the block Rayleigh fading channel model
possibly with shadowing effect, in the asymptotic regime of $n \to
\infty$. Moreover, in the strong interference scenario, the on-off
power allocation scheme is the optimal strategy, in terms of the
sum-rate throughput, assuming the availability of direct channel
gains at the transmitters.

Motivated by the results of \cite{JamshidITIT2008}, we assume that
all the links utilize the threshold-based on-off power allocation
strategy proposed in \cite{JamshidITIT2008} \footnote{ We consider a
homogeneous network in the sense that all the links have the same
configuration and use the same protocol. Thus, the transmission
strategy for all users are agreed in advance.}. Unlike most of the
works in the literature that assume backlogged users, here we assume
a practical model for the packet arrivals in which the buffer of
each link is not necessarily full (of packet) all the time. Based on
this observation, we adopt the on-off power allocation scheme during
each time slot $t$ as follows:

1- Based on the direct channel gain, the transmission policy
is\footnote{In fact, if there is no packet in the buffer, it does
not make sense for the user to be active, even if its channel is
good. }
\begin{eqnarray}
    p^{(t)}_{i}=
\left\{\begin{array}{ll}
1   ,    & \textrm{if}~~ h^{(t)}_{ii}> \tau_{n}~\textrm{and the buffer of link}~i~\textrm{is full at time slot}~t \\
0 ,  &  \textrm{Otherwise},
\end{array} \right.
\end{eqnarray}
where $p^{(t)}_{i}$ denotes the transmission power of user $i$ at
time slot $t$ and $\tau_{n}$ is a pre-specified threshold level that
is a function of $n$ and also depends on the channel model and
packet arrival process.

2- Knowing its corresponding direct channel gain, each active user $i$
transmits a Gaussian signal with full power and the rate equal to:
\begin{equation}\label{eqn: 02}
R^{(t)}_{i}=\mathbb{E}_{h^{(t)}_{ii},I^{(t)}_i} \left[\log \left( 1+  \dfrac{h^{(t)}_{ii}p^{(t)}_{i}}{I^{(t)}_{i}+N_{0}} \right)\right]~~~\textrm{nats/channel use},
\end{equation}
where $I^{(t)}_{i}=\sum_{\substack{j=1 \\ j\neq
i}}^{n}\mathcal{L}^{(t)}_{ji}p^{(t)}_{j}$ is the power of the
interference term seen by receiver $i \in \mathbb{N}_{n}$ at time
slot $t$. The above rate is achievable by encoding and decoding over
arbitrarily large number ($M$) of blocks. More precisely, assuming
the number of channel uses per each transmission block to be $N$,
the $i^{th}$ transmitter maps the message $m \in \{m_1, m_2, \cdots,
m_L\}$, where $L=2^{MNR^{(t)}_{i}}$, to a Gaussian codeword of size
$MN$, $\mathcal{C}_m \in \{\mathcal{C}_1, \mathcal{C}_2, \cdots,
\mathcal{C}_L\}$. In the $k^{th}$ block, if $p^{(t)}_{i}=1$, the
transmitter sends the $k^{th}$ portion of $\mathcal{C}_m$, denoted
by $\mathcal{C}_m (k)$. At the receiver side, the decoder considers
only the blocks in which the transmitter was transmitting with full
power, denoted by $\{a_1, \cdots, a_l\}$, and is able to decode the
message $m$, if $ L \leq 2^{Nl R_1}$, where $R_1 \triangleq
\mathbb{E}_{h^{(t)}_{ii},I^{(t)}_i} \bigg[\log \bigg( 1+
\frac{h^{(t)}_{ii}}{I^{(t)}_{i}+N_{0}} \bigg)\bigg|
p^{(t)}_{i}=1\bigg]$. Noting that as $M \to \infty$, $l \approx M
\mathbb{P} \{p^{(t)}_{i}=1\}$, and $R^{(t)}_{i} = \mathbb{P}
\{p^{(t)}_{i}=1\} R_1$, it is concluded that the rate $R^{(t)}_{i}$
is achievable. As we will see later, in the optimal performance
regime, which is the strong interference regime, encoding and
decoding over single blocks is sufficient to achieve (\ref{eqn:
02}).

\subsection{Packet Arrival Process}
One of the most important parameters in the network analysis is the
model for the packet arrival process. The packet arrival process is
a random process which is described by either the arrival time of
the packets or the interarrival time between the subsequent packets.
These quantities may be modeled by the deterministic or stochastic
processes (Fig. \ref{fig: Delay1}). In this paper, we consider the
following packet arrival processes:

\begin{itemize}
\item \textit{Poisson Arrival Process (PAP):} In this process, the number of arrived packets in any interval of unit length  is assumed
to have a Poisson distribution with the parameter
$\frac{1}{\lambda}$. This process is a commonly used model for
random and mutually independent packet arrivals in queueing theory
\cite{Gallagerbook95}.
\item \textit{Bernoulli Arrival Process (BAP):} In this process, at any given time slot, the probability that
a packet arrives is $\rho \triangleq \frac{1}{\lambda}$ \footnote{We
choose the parameter $\rho$ as $\frac{1}{\lambda}$ to be consistent
with other packet arrival processes.}. Moreover, the arrival of the
packets in different slots occurs independently. This model has been
used in many works in the literature such as \cite{NeelyITIT0605}
and \cite{ZorziAllerton98}.
\item \textit{Constant Arrival Process (CAP):} In this process, packets arrive continuously with a constant rate of $\frac{1}{\lambda}$ packets per
unit length (Fig. \ref{fig: Delay1}-b) \cite{TangITWC1207}.
\end{itemize}
\begin{figure}[bhpt]
\centerline{\psfig{figure=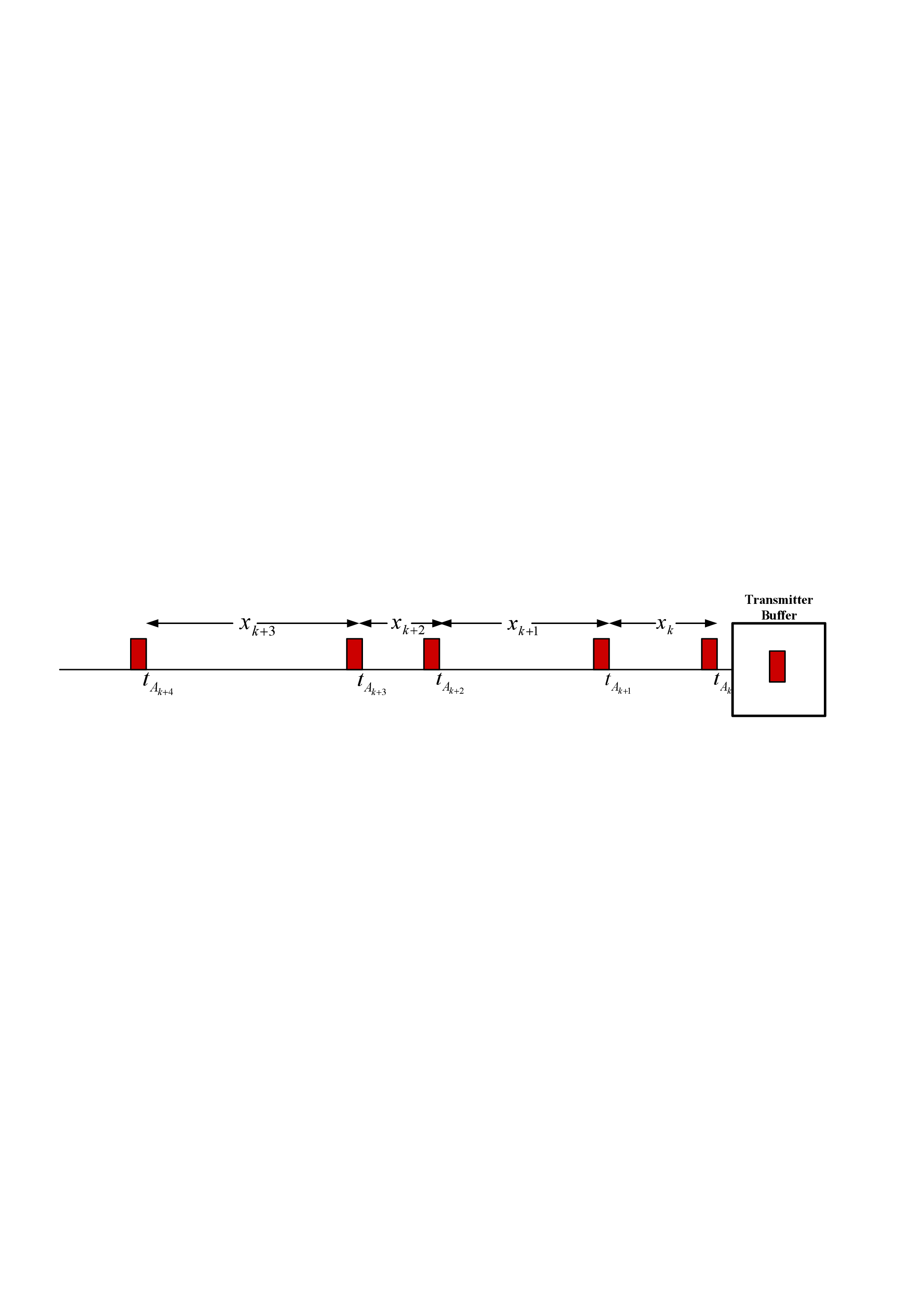,width=6.6in}}
\vspace{-7pt} \center{\hspace{16pt} \small{(a)}} \vspace{10pt}
\hspace{1pt} \centerline{\psfig{figure=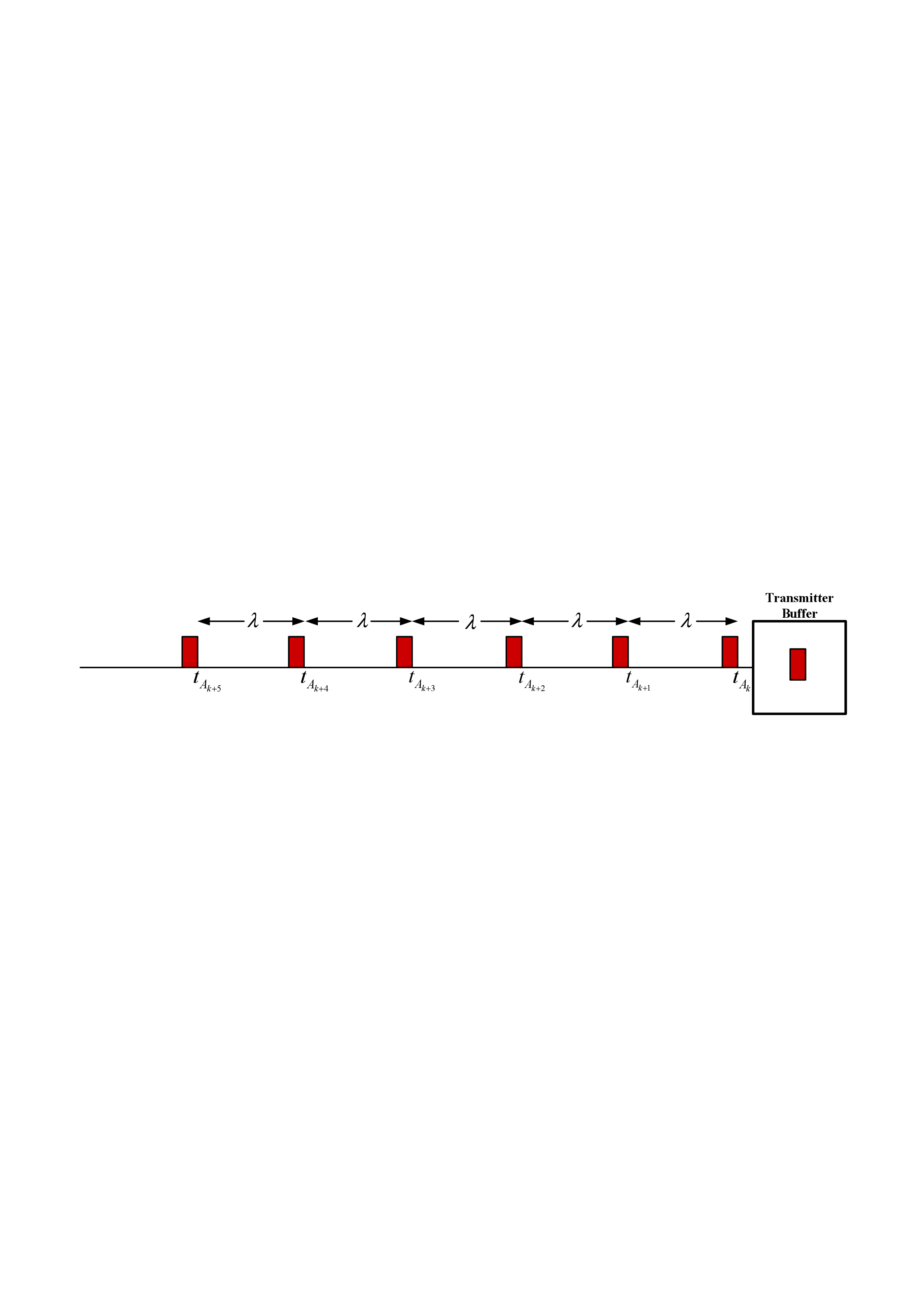,width=6.6in}}
\vspace{-35pt}
\center{\hspace{14pt} \small{(b)}} \\
\vspace{-7pt} \caption[a) .] {\small{A schematic figure for a) stochastic packet arrival process, b) constant packet arrival process.}}
\label{fig: Delay1}
\end{figure}
It is assumed that the packet arrival process for all links is the
same. Let us denote $t_{A_{k}}^{(i)}$ as the time instant of the
$k^{th}$ packet arrival into the buffer of link $i$. It is observed
from Fig. \ref{fig: Delay1}-a that
$t_{A_{k}}^{(i)}=\sum_{j=1}^{k-1}x^{(i)}_{j}+t^{(i)}_{0}$ where
$t^{(i)}_{0}$ is the starting time for link $i$, and the random
variable $x^{(i)}_{j}$ is the interarrival time defined as
\begin{equation}\label{inter01}
x^{(i)}_{j} \triangleq t^{(i)}_{A_{j+1}}-t^{(i)}_{A_{j}},
\end{equation}
with $\mathbb{E}[x^{(i)}_{j}]=\lambda$. For the CAP,
$x^{(i)}_{j}=\lambda$ and
$t_{A_{k}}^{(i)}=(k-1)\lambda+t^{(i)}_{0}$\footnote{For analysis
simplicity, we assume that $\lambda$ is an integer number.}, while
for the PAP, $x^{(i)}_{j}$'s are independent samples of an
exponential random variable $x$ with the probability density
function (pdf)
\begin{equation}\label{pdf1}
 f_{X}(x)=\dfrac{1}{\lambda}e^{-\frac{1}{\lambda}x},~~~~ x>0.
\end{equation}
Also for the BAP, $x^{(i)}_{j}$'s are independent samples of a
geometric random variable $X$ with the probability mass function
(pmf)
\begin{equation}\label{pmf1}
 p_{X}(m) \triangleq \mathbb{P} \{X=m\}=(1-\rho)^{m-1}\rho,~~~~ m=1,2,...,
\end{equation}
with $\rho \triangleq \frac{1}{\lambda}$.

We represent $t_{D_{k}}^{(i)}$ as the time instant at which either
the $k^{th}$ arriving packet departs the buffer of link $i$ for the
transmission or drops from the buffer. In such configuration, we
have the following definition:

\begin{defi}
\textbf{(Delay):} The random variable $\mathscr{D}^{(i)}_{k}
\triangleq t_{D_{k}}^{(i)}-t_{A_{k}}^{(i)}$ for each link $i$ is
defined as the delay between the departure and the arrival time of
each packet $k$, expressed in terms of the number of time slots.
\end{defi}

In this work, we assume that the buffer size for each transmitter is
one packet. Due to the this limitation on the buffer size and the
on-off power allocation strategy, the existing buffered packet may
be dropped if it is not served before the arrival of the next
packet. Mathematically speaking, the event that the dropping of
packet $k$ occurs in link $i\in \mathbb{N}_{n}$ is defined as
\begin{eqnarray}
\label{eqn: event01}\mathscr{B}_{i} & \equiv & \left \lbrace \mathscr{D}^{(i)}_{k} \geq t_{A_{k+1}}^{(i)}-t_{A_{k}}^{(i)} \right \rbrace \\
\label{eqn: event1} &\equiv& \left \lbrace \mathscr{D}^{(i)}_{k} \geq x^{(i)}_{k} \right \rbrace.
\end{eqnarray}
Therefore, the packet dropping probability in each link $i\in
\mathbb{N}_{n}$, denoted by $\mathbb{P} \left \{\mathscr{B}_{i}
\right \}$, can be obtained as
\begin{eqnarray}
\label{eqn: 03} \mathbb{P} \left \{\mathscr{B}_{i} \right \} &=& \mathbb{P} \left \lbrace \mathscr{D}^{(i)}_{k} \geq x^{(i)}_{k} \right \rbrace \\
&=& \int_{0}^{\infty}\mathbb{P} \left \lbrace \mathscr{D}^{(i)}_{k} \geq x^{(i)}_{k} \Big| x^{(i)}_{k}=x \right \rbrace f_{X}(x)dx,~~~~~\textrm{for PAP},\\
\label{eqn: BAP1}&=& \sum_{m=1}^{\infty}\mathbb{P} \left \lbrace \mathscr{D}^{(i)}_{k} \geq x^{(i)}_{k} \Big| x^{(i)}_{k}=m  \right \rbrace p_{X}(m),~~~~~~~\textrm{for BAP}, \\
\label{eqn: CAP} &=& \mathbb{P} \left \lbrace \mathscr{D}^{(i)}_{k}
\geq \lambda \right \rbrace,
~~~~~~~~~~~~~~~~~~~~~~~~~~~~~~~~~~~\textrm{for CAP}.
\end{eqnarray}
where $f_{X}(x)$ and $p_{X}(m)$ are defined as (\ref{pdf1}) and
(\ref{pmf1}), respectively. In Section \ref{secdelay1}, we will
obtain $\mathbb{P} \left \{\mathscr{B}_{i} \right \}$ for different
packet arrival processes in terms of $\lambda$ and $\tau_n$.

\subsection{Objectives}

\textbf{Part I: Throughput Maximization:} The main objective of the first
part of this paper is to maximize the throughput of the underlying network.
To address this problem, we first define a new notion of throughput,
called \textit{effective throughput}, which denotes the \textit{actual}
amount of data transmitted through the links. In order to derive the
effective throughput, we obtain the \textit{full buffer probability}
of a link for the deterministic and stochastic
packet arrival processes. Then, we compute the optimum threshold level
$\tau_n$, and the maximum effective throughput of the network, for each
packet arrival process.

\textbf{Part II: Delay Characteristics:} The main objective of the
second part is to formulate the packet dropping probability of each
link in the underlying network based on the aforementioned packet
arrival processes  in terms of the number of links ($n$), $\lambda$,
and the parameter of the on-off power allocation scheme ($\tau_n$).
This analysis enables us to derive the sufficient conditions in the
asymptotic case of $n \to \infty$ such that the packet dropping
probabilities tend to zero, while achieving the maximum effective
throughput of the network.

\textbf{Part III: Delay-Throughput-Dropping Probability Tradeoff:}
The main goal of the third part is to study the tradeoff between the
effective throughput of the network and other performance measures,
i.e., the dropping probability and the delay-bound ($\lambda$) for
different packet arrival processes. In particular, we are interested
to determine how much degradation will be enforced in the throughput
by introducing the other constraints, and how much this degradation
depends on the packet arrival process.

\section{Throughput Maximization}\label{analysis}

\subsection{Effective Throughput}
In this section, we aim to derive the maximum throughput of the
network with a large number ($n$) of links, based on using the
distributed on-off power allocation strategy. We present a new
performance metric in the network, called \emph{effective
throughput}, which is a function of the threshold level $\tau_{n}$
and $\lambda$. Let us start with the following definition.

\begin{defi}
\textbf{(Effective Throughput):} Under the on-off power allocation
strategy, the effective throughput of each link $i$, $i \in
\mathbb{N}_{n}$, is defined (on a per-block basis) as
\begin{equation}\label{thr}
\mathfrak{T}_{i} \triangleq \lim_{L \to \infty}\dfrac{1}{L}\sum_{t=1}^{L}R_{i}^{(t)}\mathcal{I}_{i}^{(t)},
\end{equation}
where $R_{i}^{(t)}$ is defined as (\ref{eqn: 02}) and
$\mathcal{I}_{i}^{(t)}$ is an indicator variable which is equal to
$1$, if user $i$ transmits at time slot $t$, and $0$ otherwise.
Furthermore, the effective throughput of the network is defined as
\begin{eqnarray}
 \mathfrak{T}_{\mathrm{eff}} \triangleq \sum_{i=1}^n \mathfrak{T}_i.
\end{eqnarray}

\end{defi}

The quantity $\mathfrak{T}_{i}$ represents the average amount of
information conveyed through link $i$ in a long period of time. This
metric is suitable for real-time applications, where the packets
have a certain amount of information and certain arrival rates. It
should be noted that $\mathcal{I}_{i}^{(t)}=1$ is equivalent to the
case in which the buffer is full and the channel gain $h^{(t)}_{ii}$
is greater than the threshold level $\tau_{n}$ at time slot $t$.
Defining the full buffer event as follows
\begin{equation}
\mathscr{C}_{i}^{(t)} \equiv \{\textrm{Buffer of link}~i~\textrm{is full at time slot}~t \},
\end{equation}
 we have
\begin{eqnarray}
\label{index01}\mathbb{P} \left \lbrace \mathcal{I}_{i}^{(t)}=1 \right \rbrace & =& \mathbb{P}\left \lbrace h^{(t)}_{ii}>\tau_{n},~ \mathscr{C}_{i}^{(t)}\right \rbrace\\
&\stackrel{(a)}{=}&\mathbb{P}\left \lbrace h^{(t)}_{ii}>\tau_{n}\right \rbrace \mathbb{P}\left \lbrace \mathscr{C}_{i}^{(t)} \right \rbrace\\
\label{index02}&=& q_{n}\Delta_{n},
\end{eqnarray}
where $q_{n} \triangleq \mathbb{P} \left\lbrace h^{(t)}_{ii} >
\tau_{n}\right\rbrace$, and $\Delta_{n} \triangleq \mathbb{P} \left
\lbrace \mathscr{C}_{i}^{(t)} \right \rbrace$ is the \textit{full
buffer probability}. In the above equations, $(a)$ follows from the
fact that the full buffer event depends on the packet arrival
process as well as the direct channel gains $h^{(t^{'})}_{ii}$, for
$t^{'} < t$, which is independent of the channel gain $h^{(t)}_{ii}$
(due to the block fading channel model). Thus,
\begin{eqnarray}\label{qdn}
    \mathcal{I}_{i}^{(t)}=
\left\{\begin{array}{ll}
1   ,    &  \textrm{with  probability} \, \, q_{n}\Delta_{n},  \\
0   ,    &  \textrm{with  probability} \, \, 1-q_{n}\Delta_{n}.
\end{array} \right.
\end{eqnarray}
It is observed that $\mathcal{I}_{i}^{(t)}$ is a Bernoulli random
variable with parameter $q_{n}\Delta_{n}$. In fact,
$q_{n}\Delta_{n}$ is the probability of the link activation which is
a function of $n$. In the sequel, we derive $\Delta_n$ for the
aforementioned packet arrival processes.

\subsection{Full Buffer Probability}
Let us denote $t^{(i)}_{a}$ as the time instant the last packet has
arrived in the  buffer of link $i$ before or at the same time $t$.
The event $\mathscr{C}_{i}^{(t)}$ implicitly indicates that during
$\mathscr{X}_i^{(t)} \triangleq t-t^{(i)}_{a}$ time slots, the
channel gain of link $i$ is less  than the threshold level
$\tau_{n}$. Clearly, $\mathscr{X}_i^{(t)}$ is a random variable
which varies from zero to infinity for the stochastic packet arrival
processes and is finite for the CAP\footnote{Note that, here we
assume that if a packet arrives at time $t$ and the channel gain is
greater than $\tau_n$ at this time, the packet will be
transmitted.}. Under the on-off power allocation scheme and using
the block fading  model property, the full buffer probability can be
obtained as\footnote{As we will show in Lemma \ref{lem004},
$\Delta_{n}$ is independent of index $i$.}
\begin{equation}\label{buffer01}
\Delta_{n}=\mathbb{E}\left[ (1-q_{n})^{\mathscr{X}_i^{(t)}} \right],
\end{equation}
where the expectation is computed with respect to
$\mathscr{X}_i^{(t)}$, and $q_{n} \triangleq \mathbb{P} \left\lbrace
h^{(t)}_{ii} > \tau_{n}\right\rbrace=e^{-\tau_n}$.

\begin{lem} \label{lem004}
Let us denote the full buffer probability of an arbitrary link $i
\in \mathbb{N}_{n}$, for the Poisson, Bernoulli and constant arrival
processes as $\Delta^{PAP}_{n}$, $\Delta^{BAP}_{n}$ and
$\Delta^{CAP}_{n}$, respectively. Then,
\begin{eqnarray}
\label{PAP001}\Delta^{PAP}_{n} &=&\dfrac{1}{1+\lambda \log (1-q_{n})^{-1}}, \\
\label{BAP001}\Delta^{BAP}_{n} &=&\dfrac{1}{1+(\lambda-1)q_{n}},\\
\label{CAP001}\Delta^{CAP}_{n} &=&\dfrac{1-(1-q_{n})^{\lambda}}{\lambda q_{n}}.
\end{eqnarray}
\end{lem}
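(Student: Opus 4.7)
The plan is to take the identity $\Delta_n = \mathbb{E}[(1-q_n)^{\mathscr{X}_i^{(t)}}]$ from (\ref{buffer01}) as the starting point and evaluate the expectation for each of the three arrival processes separately. A preliminary observation justifies the claim that $\Delta_n$ is independent of the link index $i$: all links share identical direct-channel statistics, identical threshold $\tau_n$, and an identical arrival law, so the system is exchangeable across links and the stationary distribution of $\mathscr{X}_i^{(t)}$ does not depend on $i$. This lets us drop the subscript $i$ throughout the rest of the argument.

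The core of the proof is then to identify, in each case, the steady-state distribution of $\mathscr{X}_i^{(t)}$, i.e.\ the time since the most recent packet arrival at link $i$ as seen from an arbitrary slot $t$. For the Poisson model, the memoryless property of the Poisson arrival process identifies $\mathscr{X}_i^{(t)}$ as exponential with mean $\lambda$, so $\Delta_n^{PAP}$ becomes the one-dimensional integral $\int_0^\infty (1-q_n)^x\,\tfrac{1}{\lambda} e^{-x/\lambda}\,dx$, which collapses to (\ref{PAP001}) by a routine Laplace-type evaluation. For the Bernoulli model, the discrete memoryless property gives $\mathbb{P}\{\mathscr{X}_i^{(t)}=k\}=\rho(1-\rho)^k$ for $k\geq 0$, and $\Delta_n^{BAP}$ reduces to a geometric series $\sum_{k\geq 0}\rho(1-\rho)^k(1-q_n)^k$; summing it in closed form and substituting $\rho = 1/\lambda$ gives (\ref{BAP001}). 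For the constant-arrival model, interarrival times are deterministically $\lambda$, so by either a cyclic-phase argument or a long-run time average over one full arrival period, $\mathscr{X}_i^{(t)}$ is uniform on $\{0,1,\ldots,\lambda-1\}$, reducing $\Delta_n^{CAP}$ to the finite geometric sum $\tfrac{1}{\lambda}\sum_{k=0}^{\lambda-1}(1-q_n)^k$, which equals (\ref{CAP001}).

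The main obstacle, and the only point requiring care, is the justification of the steady-state distribution of $\mathscr{X}_i^{(t)}$ in each model: the memoryless arguments for PAP and BAP rest on a mild stationarity (or large-time) assumption, and the uniform distribution in CAP relies on averaging either over the arrival phase or over a long time horizon. Both of these are natural in the $n\to\infty$ asymptotic regime considered throughout the paper. Once these distributions are pinned down, every remaining step is either a single exponential integral or a geometric series summation, and all three closed forms follow immediately with no further nontrivial input beyond $q_n = e^{-\tau_n}$ and standard simplifications.
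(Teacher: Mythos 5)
Your proposal is correct and follows essentially the same route as the paper: both start from $\Delta_n=\mathbb{E}\bigl[(1-q_n)^{\mathscr{X}_i^{(t)}}\bigr]$, identify $\mathscr{X}_i^{(t)}$ as exponential, geometric, and uniform on $\{0,\dots,\lambda-1\}$ for the PAP, BAP, and CAP respectively, and evaluate the resulting integral or geometric series. Your added remarks on exchangeability across links and on the stationarity underlying the memoryless/uniform-phase arguments are sound and only make explicit what the paper leaves implicit.
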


\begin{proof}
For the PAP, since $\mathscr{X}_i^{(t)}$ is an exponential random
variable, (\ref{buffer01}) can be simplified as
\begin{eqnarray}
\Delta^{PAP}_{n} &=& \int_{0}^{\infty}\dfrac{1}{\lambda}(1-q_{n})^{x}e^{-\frac{1}{\lambda}x} dx\\
\label{buffer02}&=& \dfrac{1}{1+\lambda \log (1-q_{n})^{-1}}.
\end{eqnarray}

Also for the BAP, $\mathscr{X}_i^{(t)}$ is a geometric random
variable with parameter $\rho=\frac{1}{\lambda}$. Thus,
(\ref{buffer01}) can be simplified as
\begin{eqnarray}
\Delta^{BAP}_{n} &=& \sum_{m=0}^{\infty}(1-q_{n})^{m}\rho(1-\rho)^{m}\\
\label{buffer03}&\stackrel{(a)}{=} & \dfrac{1}{1+(\lambda-1)q_{n}},
\end{eqnarray}
where $(a)$ follows from the following geometric series:
\begin{equation}
\sum_{m=0}^{\infty}x^m=\dfrac{1}{1-x},~~|x|<1.
\end{equation}

For the CAP, the full buffer probability in (\ref{buffer01}) can be
written as
\begin{eqnarray}
\Delta^{CAP}_{n}&\stackrel{(a)}{=}&\sum_{m=0}^{\lambda-1} (1-q_{n})^m \mathbb{P} \{\mathscr{X}_i^{(t)}=m \}\\
&\stackrel{(b)}{=}&\sum_{m=0}^{\lambda-1} (1-q_{n})^m \dfrac{1}{\lambda}\\
&\stackrel{(c)}{=}&\dfrac{1-(1-q_{n})^{\lambda}}{\lambda q_{n}},
\end{eqnarray}
where $(a)$ follows from Fig. \ref{fig: Delay1}-b, in which
$\mathscr{X}_i^{(t)}$ varies from zero to $\lambda-1$ and $(b)$
follows from the fact that for the deterministic process,
$\mathscr{X}_i^{(t)}$ has a uniform distribution. In other words,
for every value of $m \in [0,\lambda-1]$, $\mathbb{P}
\{\mathscr{X}_i^{(t)}=m \}=\frac{1}{\lambda}$. Also, $(c)$ comes
from the following geometric series:
\begin{equation}\label{geo1}
\sum_{m=0}^{s}x^m=\dfrac{1-x^{s+1}}{1-x}.
\end{equation}
\end{proof}
Having derived the full buffer probability, we obtain the effective
throughput of the network in the following section.

\subsection{Effective Throughput of the Network}

Rewriting (\ref{thr}), the effective throughput of link $i$ can be
obtained as
\begin{eqnarray}
\mathfrak{T}_i &=& \lim_{L \to \infty}\dfrac{1}{L}\sum_{t=1}^{L}R_{i}^{(t)}\mathcal{I}_{i}^{(t)}\\
&\stackrel{(a)}{=}& \mathbb{E} \left[R_{i}^{(t)} \mathcal{I}_i^{(t)}\right]\\
&=& \mathbb{E} \left. \left[ R_{i}^{(t)} \mathcal{I}_i^{(t)}\right| \mathcal{I}_i^{(t)}=1\right]\mathbb{P} \left \lbrace \mathcal{I}_{i}^{(t)}=1 \right \rbrace+ \mathbb{E} \left. \left[ R_{i}^{(t)} \mathcal{I}_i^{ (t)} \right| \mathcal{I}_i^{(t)}=0\right]\mathbb{P} \left \lbrace \mathcal{I}_{i}^{(t)}=0 \right \rbrace  \label{efft}\\
&\stackrel{(b)}{=}& q_n \Delta_{n} \mathbb{E} \left. \left[ R_{i}^{(t)} \right| h_{ii}^{(t)}>\tau_n , \mathscr{C}_i^{(t)}\right] \\
\label{effect01}&\stackrel{(c)}{=}& q_n \Delta_{n} \mathbb{E}  \left. \left[\log \left( 1+\frac{h_{ii}^{(t)}}{I_i ^{(t)}+N_0}\right)\right| h_{ii}^{(t)}>\tau_n\right] \label{Ti},
\end{eqnarray}
where the expectation is computed with respect to $h^{(t)}_{ii}$ and
the interference term $I^{(t)}_{i}$. In the above equations, $(a)$
follows from the ergodicity of the channels (due to the block fading
model), which implies that the average over time is equal to average
over realization.  $(b)$ results from
(\ref{index01})-(\ref{index02}) and $\mathbb{E} \big[ R_{i}^{(t)}
\mathcal{I}_i^{(t)} \big| \mathcal{I}_i^{(t)}=0\big] =0$. Finally,
$(c)$ results from the fact that the term $\log \left(
1+\frac{h_{ii}^{(t)}}{I_i ^{(t)}+N_0}\right)$ is independent of
$\mathscr{C}_i^{(t)}$.

In order to derive the effective throughput, we need to find the
statistical behavior of $I^{(t)}_i$ which is performed in the
following lemmas:

\begin{lem} \label{lemma000_Ch5}
Under the on-off power scheme, we have
\begin{equation}
\mathbb{E}\left[ I^{(t)}_i \right]=(n-1)\hat{\alpha} q_{n}\Delta_{n},
\end{equation}
\begin{equation}
\textrm{Var} \left[I^{(t)}_i \right] \leq (n-1)(2\alpha \kappa q_{n}\Delta_{n}),
\end{equation}
where $\hat{\alpha} \triangleq \alpha \varpi$ and $\kappa \triangleq \mathbb{E}\left[\left(\beta^{(t)}_{ji}\right)^{2} \right] $.
\end{lem}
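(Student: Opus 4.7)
The plan is to compute the two moments by decomposing the interference sum $I_i^{(t)} = \sum_{j \ne i} \mathcal{L}_{ji}^{(t)} p_j^{(t)}$ and exploiting the independence structure of the model. The key observation is that $\mathcal{L}_{ji}^{(t)}$ depends only on the cross-channel between transmitter $j$ and receiver $i$, while $p_j^{(t)}$ depends only on the direct channel $h_{jj}^{(t)}$ together with the (past) buffer state of link $j$. These two collections of random variables are independent of each other, and, across distinct indices $j$, the summands $\mathcal{L}_{ji}^{(t)} p_j^{(t)}$ are mutually independent because the channels and arrival processes of different links are independent.

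For the mean, I would first use independence to write $\mathbb{E}[\mathcal{L}_{ji}^{(t)} p_j^{(t)}] = \mathbb{E}[\mathcal{L}_{ji}^{(t)}] \, \mathbb{E}[p_j^{(t)}]$. Then by the shadowing definition in (\ref{eqn: crosschannel}),
\begin{equation*}
\mathbb{E}[\mathcal{L}_{ji}^{(t)}] = \alpha \, \mathbb{E}[\beta_{ji}^{(t)}] \, \mathbb{E}[h_{ji}^{(t)}] = \alpha \varpi = \hat{\alpha},
\end{equation*}
while by (\ref{qdn}) we have $\mathbb{E}[p_j^{(t)}] = \mathbb{P}\{\mathcal{I}_j^{(t)} = 1\} = q_n \Delta_n$. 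Summing over the $n-1$ interferers yields $\mathbb{E}[I_i^{(t)}] = (n-1)\hat{\alpha} q_n \Delta_n$.

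For the variance, mutual independence of the summands gives $\textrm{Var}[I_i^{(t)}] = \sum_{j \ne i} \textrm{Var}[\mathcal{L}_{ji}^{(t)} p_j^{(t)}]$, so it suffices to bound each term. Using $\textrm{Var}[X] \le \mathbb{E}[X^2]$ and the factorization via independence,
\begin{equation*}
\mathbb{E}\bigl[(\mathcal{L}_{ji}^{(t)} p_j^{(t)})^2\bigr] = \mathbb{E}\bigl[(\mathcal{L}_{ji}^{(t)})^2\bigr] \, \mathbb{E}\bigl[(p_j^{(t)})^2\bigr].
\end{equation*}
Since $p_j^{(t)} \in \{0,1\}$, $\mathbb{E}[(p_j^{(t)})^2] = q_n \Delta_n$, and since $h_{ji}^{(t)}$ is unit-mean exponential (so $\mathbb{E}[(h_{ji}^{(t)})^2] = 2$), one gets $\mathbb{E}[(\mathcal{L}_{ji}^{(t)})^2] = 2\alpha \kappa$. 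Multiplying and summing over $j \ne i$ gives the bound $\textrm{Var}[I_i^{(t)}] \le (n-1)(2\alpha\kappa q_n \Delta_n)$.

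The only real subtlety—and thus the main thing I would be careful about in writing out the proof—is justifying the independence claims rigorously: namely that $\mathcal{L}_{ji}^{(t)}$ is independent of $p_j^{(t)}$ (because the cross link $j \to i$ is distinct from the direct channel $h_{jj}^{(t)}$ and from link $j$'s arrival history), and that the summands are independent across $j$ (because links have independent channels and independent arrival processes). Once this is in place, the computations are routine and the dropped nonnegative term $(\mathbb{E}[\mathcal{L}_{ji}^{(t)} p_j^{(t)}])^2 = \hat{\alpha}^2 (q_n \Delta_n)^2 \ge 0$ is what makes the variance statement an inequality rather than an equality.
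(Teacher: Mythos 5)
Your proposal is correct and follows essentially the same route as the paper's proof: decompose $I_i^{(t)}$ into the independent summands $\mathcal{L}_{ji}^{(t)}p_j^{(t)}$, factor the moments using the independence of the cross channel from the activity indicator, use $\mathbb{E}[p_j^{(t)}]=q_n\Delta_n$ and $\mathbb{E}[(h_{ji}^{(t)})^2]=2$, and bound the variance by the second moment (the paper equivalently uses $(p_j^{(t)})^2\leq p_j^{(t)}$ and then drops the $-(\hat{\alpha}q_n\Delta_n)^2$ term). No substantive differences.
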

\begin{proof}
See Appendix \ref{append01_a}.
\end{proof}

\begin{lem} \label{lemma000}
The maximum effective throughput is achieved at $\lambda = o (n)$
and the strong interference regime which is defined as $\mathbb{E}
[I^{(t)}_i]=\omega (1)$, $i \in \mathbb{N}_n$.
\end{lem}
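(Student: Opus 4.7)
The plan is to derive an upper bound on $\mathfrak{T}_{\mathrm{eff}}$ that is strictly of order smaller than $\log n/\hat{\alpha}$ whenever either of the two stated conditions fails, and then compare with the target scaling that the abstract claims is achievable. By the homogeneity of the network and equation (\ref{Ti}), the effective throughput equals $n\,q_n\Delta_n\,\mathbb{E}\bigl[\log(1+h_{ii}^{(t)}/(I_i^{(t)}+N_0))\,\big|\,h_{ii}^{(t)}>\tau_n\bigr]$. Because $\log(1+\cdot)$ is concave, $h_{ii}^{(t)}$ is independent of $I_i^{(t)}$ under the block fading model, and the memorylessness of the unit-mean exponential gives $\mathbb{E}[h_{ii}^{(t)}\,|\,h_{ii}^{(t)}>\tau_n]=\tau_n+1$, an application of Jensen's inequality together with the trivial bound $\mathbb{E}[1/(I_i^{(t)}+N_0)]\leq 1/N_0$ yields the master inequality
\[
\mathfrak{T}_{\mathrm{eff}} \;\leq\; n\,q_n\,\Delta_n\,\log\!\bigl(1+(\tau_n+1)/N_0\bigr). \qquad (\ast)
\]

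For the first part (necessity of the strong interference regime), suppose $\mathbb{E}[I_i^{(t)}]=(n-1)\hat{\alpha} q_n \Delta_n=O(1)$, which by Lemma \ref{lemma000_Ch5} is equivalent to $n q_n \Delta_n = O(1/\hat{\alpha})$. If $\tau_n=O(\log n)$, then $(\ast)$ immediately gives $\mathfrak{T}_{\mathrm{eff}}=O(\log\log n)$. If instead $\tau_n=\omega(\log n)$, write $\tau_n=\log n + g(n)$ with $g(n)\to\infty$; then $nq_n\Delta_n\log(1+\tau_n) \leq e^{-g(n)}\log(1+\log n+g(n))$, which is at most $O(\log\log n)$ (the exponential decay in $g$ dominates any growth of $\log g$, while $\log\log n$ is already $o(\log n)$). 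In either sub-case $\mathfrak{T}_{\mathrm{eff}}=o(\log n/\hat{\alpha})$, contradicting optimality.

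For the second part (necessity of $\lambda=o(n)$), suppose $\lambda=\Omega(n)$. Using Lemma \ref{lem004} and the elementary inequality $\log(1-q_n)^{-1}\geq q_n$, we have $\Delta_n^{\mathrm{PAP}}\leq 1/(\lambda q_n)$, and directly $\Delta_n^{\mathrm{BAP}}\leq 1/((\lambda-1)q_n)$ and $\Delta_n^{\mathrm{CAP}}\leq 1/(\lambda q_n)$. Multiplying each by $nq_n$ gives $n q_n \Delta_n = O(n/\lambda) = O(1)$ uniformly in $\tau_n$. Hence Lemma \ref{lemma000_Ch5} forces $\mathbb{E}[I_i^{(t)}]=O(1)$ for all three arrival models, placing us back in the weak interference regime analysed in the previous paragraph, and the same chain of estimates delivers $\mathfrak{T}_{\mathrm{eff}}=O(\log\log n)=o(\log n/\hat{\alpha})$.

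The main technical obstacle is the sub-case $\tau_n=\omega(\log n)$ of the first part, where one must carefully balance the growing Jensen factor $\log(1+(\tau_n+1)/N_0)$ against the exponential decay $nq_n=ne^{-\tau_n}$. Substituting $\tau_n=\log n+g(n)$ is the cleanest route: the product reduces to $e^{-g(n)}\log(1+\log n+g(n))$, whose supremum over $g(n)\to\infty$ is of order at most $\log\log n$ (essentially attained when $g(n)$ grows slowly). Combining the two parts, any $(\tau_n,\lambda)$ pair violating either strong interference or $\lambda=o(n)$ forces $\mathfrak{T}_{\mathrm{eff}}$ to be of order at most $\log\log n$, strictly below the achievable $\log n/\hat{\alpha}$, so the maximum is attained only in the regime stated by the lemma.
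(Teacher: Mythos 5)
Your upper-bound analysis is sound and follows essentially the same route as the paper: drop the interference, apply Jensen's inequality to obtain the factor $\log\left(1+\frac{\tau_n+1}{N_0}\right)$, and use $\Delta_n \le \min\left(1,\frac{1}{\lambda q_n}\right)$ (valid for all three arrival processes by Lemma \ref{lem004}) to force $n q_n \Delta_n = O(1)$, hence $\mathfrak{T}_{\mathrm{eff}} = O(\log\log n)$, whenever $\lambda=\Omega(n)$ or $\mathbb{E}[I_i^{(t)}]=O(1)$. Your case split on $\tau_n$ versus $\log n$ is a clean way to balance $ne^{-\tau_n}$ against the growing Jensen factor, and it avoids the paper's somewhat informal dichotomy ``$\Delta_n \sim \frac{1}{\lambda q_n}$ or $\Delta_n=\Theta(1)$.''

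There is, however, a genuine gap in the second half of the argument. To conclude that the maximum is \emph{achieved} in the regime $\lambda=o(n)$, $\mathbb{E}[I_i^{(t)}]=\omega(1)$, you must exhibit an operating point inside that regime whose throughput strictly exceeds the $O(\log\log n)$ ceiling you established for the complementary regimes. You instead appeal to the scaling ``that the abstract claims is achievable.'' That appeal is circular here: the achievability results (Theorems \ref{th2}--\ref{th4}) are proved via Lemma \ref{lemma005}, which rests on Lemma \ref{lemma001} and on the restriction to the strong-interference regime that is justified precisely by the present lemma. The paper closes this loop with a short self-contained lower bound: starting from (\ref{Ti}), the convexity of $\log\left(1+\frac{b}{x+a}\right)$ in $x$ and Jensen's inequality give $\mathfrak{T}_i \ge q_n\Delta_n \log\left(1+\frac{\tau_n}{(n-1)\hat{\alpha}q_n\Delta_n+N_0}\right)$, and the explicit choice $q_n=\frac{\log^2 n}{n}$, $\lambda=\frac{n}{\log^2 n}$ yields $\mathfrak{T}_i=\Theta\left(\frac{\log n}{n}\right)$, which dominates the $\Theta\left(\frac{\log\log n}{n}\right)$ per-link bound obtained in the excluded regimes. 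Adding this (or any equally explicit construction) is needed to complete your proof.
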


\begin{proof}
Suppose that $\lambda \neq o (n)$ which implies that $\lambda =
\Omega (n)$. Using (\ref{Ti}), we have
\begin{eqnarray}
\mathfrak{T}_i &\leq& q_n \Delta_n \mathbb{E}  \left. \left[\log \left( 1+\frac{h_{ii}^{(t)}}{N_0}\right)\right| h_{ii}^{(t)}>\tau_n\right] \\
&\stackrel{(a)}{\leq}& q_n \Delta_n \log \left( 1+ \frac{\mathbb{E} \left. \left[ h_{ii}^{(t)} \right| h_{ii}^{(t)}>\tau_n\right]}{N_0} \right) \\
&=& q_n \Delta_n \log \left( 1+ \frac{\tau_n +1}{N_0} \right), \label{Ti1}
\end{eqnarray}
where $(a)$ comes from the concavity of $\log (.)$ function and
\textit{Jensen's inequality}, $\mathbb{E}\left[\log x
\right]\leq\log (\mathbb{E}\left[x \right])$, $x>0$. Following
(\ref{PAP001}) - (\ref{CAP001}), it is revealed that $\Delta_n \leq
\min \left(1,\frac{1}{\lambda q_n} \right)$ for all packet arrival
processes. Substituting in (\ref{Ti1}), we have
\begin{eqnarray}
\mathfrak{T}_i &\leq& \frac{1}{\lambda} \log \left(1+ \frac{\log \lambda +1}{N_0} \right) \notag\\
&\sim& \frac{\log \log \lambda}{\lambda},
\end{eqnarray}
which follows from the fact that the maximum value of $q_n \Delta_n
\log \left( 1+ \frac{\tau_n +1}{N_0} \right)$ with the condition of
$\Delta_n \leq \min \left(1,\frac{1}{\lambda q_n} \right)$ is
attained at $q_n = \frac{1}{\lambda}$. Noting that $\lambda = \Omega
(n)$, we have $\mathfrak{T}_i \leq \Theta \left(\frac{\log \log
n}{n}\right)$.

Now, suppose that $\lambda = o(n)$ but $\mathbb{E} [I^{(t)}_i] \neq
\omega (1)$, or equivalently, $\mathbb{E} [I^{(t)}_i] = O (1)$ for
some $i$. Since $\mathbb{E} [I^{(t)}_i] = (n-1) \hat{\alpha} q_n
\Delta_n$, the condition $\mathbb{E} [I^{(t)}_i] = O (1)$ implies
that there exists a constant $c$ such that $q_n \Delta_n \leq
\frac{c}{n}$. Noting (\ref{PAP001}) - (\ref{CAP001}), it follows
that either $\Delta_n \sim \frac{1}{\lambda q_n}$ or $\Delta_n =
\Theta (1)$. In the first case, the condition $q_n \Delta_n \leq
\frac{c}{n}$ implies that $n \leq c \lambda$ which cannot hold due
to the assumption of $\lambda = o(n)$. Therefore, we must have $q_n
\leq \frac{c'}{n}$, for some constant $c'$. Substituting in
(\ref{Ti1}) yields
\begin{eqnarray}
 \mathfrak{T}_i &\leq& \frac{c'}{n} \log \left( 1+\frac{\tau_n +1}{N_0} \right) \notag\\
&\stackrel{(a)}{\leq}& \frac{c'}{n} \log \left( 1+\frac{\log (n/c') +1}{N_0} \right) \notag\\
&=& \Theta \left( \frac{\log \log n}{n}\right),
\end{eqnarray}
where $(a)$ results from the fact that $q_n \log \left(
1+\frac{\tau_n +1}{N_0} \right)$ is an increasing function of $q_n$
and reaches its maximum at the boundary which is $\frac{c'}{n}$.

In the sequel, we present a lower-bound on the effective throughput
of link $i$ in the region $\lambda = o(n)$ and $\mathbb{E}
[I^{(t)}_i] = \omega (1)$ and show that this lower-bound beats the
upper-bounds derived in the other regions, proving the desired
result. For this purpose, using (\ref{Ti}), we write
\begin{eqnarray}
 \mathfrak{T}_i &\stackrel{(a)}{\geq}& q_n \Delta_{n} \log \left( 1+\frac{\tau_n}{\mathbb{E} \left. \left[ I^{(t)}_i \right| h^{(t)}_{ii} > \tau_n \right] + N_o} \right) \notag\\
&\stackrel{(b)}{=}& q_n \Delta_{n} \log \left( 1+\frac{\tau_n}{ (n-1) \hat{\alpha} q_n \Delta_n + N_o} \right) \notag\\
&\stackrel{(c)}{\approx}& q_n \Delta_{n} \log \left( 1+\frac{\tau_n}{ (n-1) \hat{\alpha} q_n \Delta_n } \right),
\end{eqnarray}
where $(a)$ follows from the convexity of the function $\log
(1+\frac{b}{x+a})$ with respect to $x$ and Jensen's inequality,
$(b)$ results from the independence of $I^{(t)}_i$ from
$h^{(t)}_{ii}$, and $(c)$ follows from neglecting the term $N_0$
with respect to $(n-1) \hat{\alpha} q_n \Delta_n$ due to the strong
interference assumption. Setting $q_n = \frac{\log^2 n}{n}$ and
$\lambda = \frac{n}{\log^2 n}$, it is easy to check that
$\frac{\tau_n}{ (n-1) \hat{\alpha} q_n \Delta_n } = o(1)$ and hence,
$\log \left( 1+\frac{\tau_n}{ (n-1) \hat{\alpha} q_n \Delta_n }
\right) \approx \frac{\tau_n}{ (n-1) \hat{\alpha} q_n \Delta_n }$
which gives the effective throughput as $\frac{\tau_n}{(n-1)
\hat{\alpha}} = \Theta \left( \frac{\log n}{n} \right)$ which is
greater than the throughput obtained in the other regimes.
\end{proof}
Due to the result of Lemma \ref{lemma000}, we restrict ourselves to
the case of $\lambda = o(n)$ and the strong interference regime in
the rest of the paper.

\begin{lem} \label{lemma001} Let us assume $0<\alpha \leq 1$ is
fixed and we are in the strong interference regime (i.e., $\mathbb{E}\left[I^{(t)}_i \right]= \omega (1)$).
Then \textit{with probability one} (w. p. 1),
we have
\begin{equation}
 I^{(t)}_{i}~\sim~ (n-1) \hat{\alpha} q_{n}\Delta_{n},
\end{equation}
as $n \to \infty$. More precisely, substituting $I^{(t)}_i$ by
$(n-1)\hat{\alpha} q_{n}\Delta_{n}$ does not change the asymptotic
effective throughput of the network.
\end{lem}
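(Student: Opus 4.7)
The plan is to combine a second-moment concentration of $I^{(t)}_i$ around its mean with a sandwich argument on the logarithm, exploiting the independence of $h^{(t)}_{ii}$ from $I^{(t)}_i$.

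Setting $\mu_n := (n-1)\hat{\alpha} q_n \Delta_n$, Lemma \ref{lemma000_Ch5} gives $\mathrm{Var}[I^{(t)}_i]/\mu_n^2 \leq 2\kappa/(\varpi\hat{\alpha}\mu_n)$, which vanishes under the strong-interference hypothesis $\mu_n = \omega(1)$. Chebyshev's inequality then yields, for every fixed $\epsilon > 0$,
\begin{equation*}
\mathbb{P}\bigl\{|I^{(t)}_i - \mu_n| > \epsilon \mu_n\bigr\} \leq \frac{\mathrm{Var}[I^{(t)}_i]}{\epsilon^2 \mu_n^2} = O\!\left(\frac{1}{\mu_n}\right) \longrightarrow 0,
\end{equation*}
which is the convergence-in-probability statement underlying the ``w.p.\,1'' claim (and which upgrades to almost-sure convergence by Borel--Cantelli whenever $\mu_n$ grows fast enough that $\sum_n 1/\mu_n < \infty$, for instance in the optimal regime $\mu_n = \Theta(\log n)$ identified in the proof of Lemma \ref{lemma000}).

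For the stronger assertion that substituting $\mu_n$ for $I^{(t)}_i$ preserves the asymptotic effective throughput, I would split the inner expectation in (\ref{Ti}) according to the concentration event $\mathcal{E}_\epsilon := \{(1-\epsilon)\mu_n \leq I^{(t)}_i \leq (1+\epsilon)\mu_n\}$ and its complement. Since $h^{(t)}_{ii}$ is independent of $I^{(t)}_i$ (the latter depending only on cross channels and on other users' buffer states), conditioning on $h^{(t)}_{ii} > \tau_n$ does not alter the law of $I^{(t)}_i$. On $\mathcal{E}_\epsilon$, monotonicity of $\log$ sandwiches $\log\bigl(1 + h^{(t)}_{ii}/(I^{(t)}_i + N_0)\bigr)$ between the two deterministic logs obtained by replacing $I^{(t)}_i$ with $(1\pm\epsilon)\mu_n$; sending $n\to\infty$ first and then $\epsilon\to 0$, and using $N_0 = o(\mu_n)$ from strong interference, the $\mathcal{E}_\epsilon$-contribution equals the substituted expression up to a factor $1 + o(1)$.

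The main obstacle will be controlling the $\mathcal{E}_\epsilon^c$-contribution, where $I^{(t)}_i$ may be atypically small and the log admits only the trivial bound $\log(1 + h^{(t)}_{ii}/N_0)$. Conditional on $h^{(t)}_{ii} > \tau_n$, memorylessness of the exponential distribution gives $h^{(t)}_{ii} - \tau_n \sim \mathrm{Exp}(1)$, so Jensen's inequality yields $\mathbb{E}\bigl[\log(1 + h^{(t)}_{ii}/N_0) \bigm| h^{(t)}_{ii} > \tau_n\bigr] = O(\log \tau_n)$. Combined with $\mathbb{P}(\mathcal{E}_\epsilon^c) = O(1/\mu_n)$ and the aforementioned independence, this bounds the $\mathcal{E}_\epsilon^c$-contribution by $O(\log \tau_n / \mu_n)$. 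From the lower-bound analysis in the proof of Lemma \ref{lemma000}, the leading term of $\mathbb{E}\bigl[\log\bigl(1+h^{(t)}_{ii}/(\mu_n+N_0)\bigr) \bigm| h^{(t)}_{ii} > \tau_n\bigr]$ is of order $\tau_n/\mu_n$, so the ratio of the bad to the good contributions is $O(\log \tau_n / \tau_n) = o(1)$, confirming the asymptotic equivalence.
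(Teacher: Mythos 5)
Your proposal is correct in substance but follows a genuinely different route from the paper. The paper's Appendix B invokes the Central Limit Theorem to write $\mathbb{P}\{|I^{(t)}_i-\mu_n|<\psi_n\}\approx 1-Q(\psi_n/\vartheta_n)$, bounds $Q(x)\leq e^{-x^2/2}$, and chooses $\psi_n=(nq_n\Delta_n)^{1/8}\sqrt{2}\,\vartheta_n$ to get a deviation probability of at most $e^{-(nq_n\Delta_n)^{1/4}}$ with relative error $\varepsilon=O\bigl((nq_n\Delta_n)^{-3/8}\bigr)$; it then stops, leaving the ``substitution does not change the throughput'' claim implicit. You replace the CLT step with Chebyshev, which is more elementary and arguably more rigorous (the paper uses the CLT as a non-asymptotic tail bound, which strictly requires a Berry--Esseen-type argument), at the cost of a polynomial rather than quasi-exponential concentration rate --- immaterial for the lemma's conclusion. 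More importantly, your good/bad-event decomposition of the conditional expectation in (\ref{Ti}), with the $O(\log\tau_n/\mu_n)$ bound on the complement versus the $\Theta(\tau_n/\mu_n)$ leading term, actually proves the operational statement the lemma is used for, which the paper does not spell out; this is a genuine improvement. Two small corrections: first, your variance ratio simplifies to $2\kappa/(\varpi\mu_n)$ rather than $2\kappa/(\varpi\hat{\alpha}\mu_n)$ (still $O(1/\mu_n)$, so nothing breaks); second, your Borel--Cantelli aside is wrong as stated, since in the optimal regime $\mu_n=\Theta(\log n)$ (or $\Theta(\log^2 n)$) the series $\sum_n 1/\mu_n$ diverges --- but note that the paper's own bound $e^{-(nq_n\Delta_n)^{1/4}}$ is likewise not summable there, so neither argument upgrades to genuine almost-sure convergence; both establish convergence in probability, and the meaningful content is the throughput-equivalence claim that you do prove.
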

\begin{proof}
See Appendix \ref{append01}.
\end{proof}

\begin{lem} \label{lemma005}
The effective throughput of the network for large values of $n$ can
be obtained as
\begin{equation}\label{opio}
\mathfrak{T}_{\mathrm{eff}} \approx n q_n \Delta_{n} \log \left( 1+\frac{\tau_n}{n \hat{\alpha} q_n \Delta_n}\right).
\end{equation}
\end{lem}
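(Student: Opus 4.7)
The plan is to reduce (\ref{Ti}) by successively simplifying the conditional expectation inside the logarithm. By symmetry of the network, all links have the same statistics, so $\mathfrak{T}_{\mathrm{eff}}=n\mathfrak{T}_i$ for any $i\in\mathbb{N}_n$, which means it suffices to analyze one generic link and multiply by $n$ at the end.

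First I would invoke Lemma \ref{lemma001}: since we are in the strong interference regime, substituting $I^{(t)}_i$ by its deterministic asymptotic equivalent $(n-1)\hat{\alpha}q_n\Delta_n$ inside the log does not change the asymptotic value of $\mathfrak{T}_i$. Next, within the same strong interference regime $\mathbb{E}[I^{(t)}_i]=(n-1)\hat{\alpha}q_n\Delta_n=\omega(1)$, so the additive noise $N_0$ is negligible compared with this term and may be dropped. Finally, replacing $n-1$ by $n$ for large $n$, what remains is to evaluate
\begin{equation}
\mathfrak{T}_i\;\approx\;q_n\Delta_n\,\mathbb{E}\!\left[\left.\log\!\left(1+\frac{h^{(t)}_{ii}}{n\hat{\alpha}q_n\Delta_n}\right)\right|h^{(t)}_{ii}>\tau_n\right].
\end{equation}

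The remaining task, and the one requiring some care, is to show that this conditional expectation is asymptotically equal to $\log\bigl(1+\tau_n/(n\hat{\alpha}q_n\Delta_n)\bigr)$. For this I would sandwich the expectation between two bounds with letting $A\triangleq n\hat{\alpha}q_n\Delta_n$. Since $\log(1+x/A)$ is increasing in $x$ and $h^{(t)}_{ii}\ge\tau_n$ on the conditioning event, the lower bound $\log(1+\tau_n/A)$ is immediate. For the upper bound I would apply Jensen's inequality together with the memoryless property of the exponential distribution, giving $\mathbb{E}[h^{(t)}_{ii}\mid h^{(t)}_{ii}>\tau_n]=\tau_n+1$ and therefore the upper bound $\log(1+(\tau_n+1)/A)$. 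The gap between the two bounds equals $\log\!\bigl(1+1/(A+\tau_n)\bigr)=O(1/(A+\tau_n))$, which in the relevant regime (where, as the proof of Lemma \ref{lemma000} shows, $\tau_n=\Theta(\log n)\to\infty$) is of strictly lower order than the main term $\log(1+\tau_n/A)$, regardless of whether $\tau_n/A$ tends to $0$, a positive constant, or infinity.

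The main obstacle I anticipate is the last step: making rigorous that the $+1$ coming from $\mathbb{E}[h^{(t)}_{ii}\mid h^{(t)}_{ii}>\tau_n]$ is asymptotically absorbed by $\tau_n$ inside the logarithm. Concretely, one must verify in each regime of the ratio $\tau_n/A$ that $\log(1+(\tau_n+1)/A)\sim\log(1+\tau_n/A)$, which follows from the explicit bound above but requires handling the three sub-cases. Once this is established, multiplying by $n$ yields the stated approximation $\mathfrak{T}_{\mathrm{eff}}\approx nq_n\Delta_n\log\!\bigl(1+\tau_n/(n\hat{\alpha}q_n\Delta_n)\bigr)$.
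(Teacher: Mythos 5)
Your proposal is correct and follows essentially the same route as the paper's proof: replace $I^{(t)}_i$ by $(n-1)\hat{\alpha}q_n\Delta_n$ via Lemma \ref{lemma001}, drop $N_0$ in the strong-interference regime, and then sandwich the conditional expectation between the lower bound $\log\left(1+\tau_n/(n\hat{\alpha}q_n\Delta_n)\right)$ and the Jensen upper bound $\log\left(1+(\tau_n+1)/(n\hat{\alpha}q_n\Delta_n)\right)$. The only cosmetic difference is in justifying that the two bounds coincide asymptotically: you invoke $\tau_n=\Theta(\log n)\to\infty$ from the regime identified in Lemma \ref{lemma000}, whereas the paper argues by contradiction that $\tau_n=O(1)$ would cap the throughput at $O(1)$ while $\Theta(\log n)$ is achievable; both arguments are sound.
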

\begin{proof}
Using (\ref{effect01}), the effective throughput of the network in
the asymptotic case of $n \to \infty$ is obtained as
\begin{eqnarray}
\mathfrak{T}_{\mathrm{eff}} &=& \sum_{i=1}^{n}\mathfrak{T}_i\\
&\stackrel{(a)}{\approx}& n q_n \Delta_{n} \mathbb{E}  \left. \left[\log \left( 1+\frac{h_{ii}^{(t)}}{(n-1) \hat{\alpha} q_{n}\Delta_{n}+N_0}\right)\right| h_{ii}^{(t)}>\tau_n\right]\\
\label {eqn:q10}&\stackrel{(b)}{\approx}&n q_n \Delta_{n} \mathbb{E}  \left. \left[\log \left( 1+\frac{h_{ii}^{(t)}}{n\hat{\alpha} q_{n}\Delta_{n}}\right)\right| h_{ii}^{(t)}>\tau_n\right],
\end{eqnarray}
where $(a)$ results from the strong interference assumption and
Lemma \ref{lemma001}, and $(b)$ follows from approximating $(n-1)
\hat{\alpha} q_{n}\Delta_{n}+N_0$  by $n\hat{\alpha}
q_{n}\Delta_{n}$ due to the strong interference assumption and large
values of $n$. A lower-bound on (\ref{eqn:q10}) can be written as
\begin{eqnarray} \label{kosekhar1}
 \mathfrak{T}_{\mathrm{eff}}^l = n q_n \Delta_{n} \log \left( 1+\frac{\tau_n}{n\hat{\alpha} q_{n}\Delta_{n}}\right).
\end{eqnarray}
Furthermore, due to the concavity of $\log (.)$ function and
Jensen's inequality, an upper-bound on $\mathfrak{T}_{\mathrm{eff}}$
can be given as
\begin{eqnarray} \label{kosekhar2}
 \mathfrak{T}_{\mathrm{eff}}^u &=& n q_n \Delta_{n} \log \left( 1+\frac{\mathbb{E} \left. \left[ h_{ii}^{(t)} \right| h_{ii}^{(t)} > \tau_n\right]}{n\hat{\alpha} q_{n}\Delta_{n}}\right) \notag\\
&=& n q_n \Delta_{n} \log \left( 1+\frac{\tau_n +1}{n\hat{\alpha} q_{n}\Delta_{n}}\right).
\end{eqnarray}
In order to prove that the above upper and lower bounds have the
same scaling, it is sufficient to show that the optimum threshold
value ($\tau_n$) is much larger than one. For this purpose, we note
that if $\tau_n = O(1)$, then the effective throughput of the
network will be upper-bounded by
\begin{eqnarray}
\mathfrak{T}_{\mathrm{eff}} &\stackrel{(a)}{\leq}& \dfrac{\tau_{n}+1}{\hat{\alpha}}\\
&=& O(1),
\end{eqnarray}
where $(a)$ follows from $\log (1+x) \leq x$. In other words, the
effective throughput of the network does not scale with $n$, while
the throughput of $\Theta (\log n)$, as will be shown later, is
achievable. This suggests that the optimum threshold value must grow
with $n$, and hence, the bounds given in  (\ref{kosekhar1}) and
(\ref{kosekhar2}) are asymptotically equal to $n q_n \Delta_{n} \log
\left( 1+\frac{\tau_n}{n\hat{\alpha} q_{n}\Delta_{n}}\right)$ and
this completes the proof of the lemma.
\end{proof}

\begin{lem} \label{lemma006}
The maximum effective throughput of the network is obtained in  the
region that $\tau_n =o \left( n \hat{\alpha}q_n \Delta_n \right)$.
\end{lem}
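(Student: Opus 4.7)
The plan is to exploit the factored form of the asymptotic throughput expression from Lemma~\ref{lemma005}. Introduce the ratio
\[
y \triangleq \frac{\tau_n}{n \hat{\alpha} q_n \Delta_n},
\]
so the approximation of Lemma~\ref{lemma005} reads
\[
\mathfrak{T}_{\mathrm{eff}} \approx n q_n \Delta_n \log(1+y) = \frac{\tau_n}{\hat{\alpha}} \cdot \frac{\log(1+y)}{y}.
\]
Thus locating the optimal $\tau_n$ reduces to understanding the scalar function $g(y) \triangleq \log(1+y)/y$, which quantifies the efficiency with which the ``ceiling'' $\tau_n/\hat{\alpha}$ is attained.

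First I would record the elementary properties of $g$: it is strictly decreasing on $(0,\infty)$, with $g(0^+)=1$ and $g(\infty)=0$. This yields the upper bound $\mathfrak{T}_{\mathrm{eff}} \leq \tau_n/\hat{\alpha}$, asymptotically tight iff $y=o(1)$. Dually, whenever $y = \Omega(1)$ along a subsequence, i.e.\ $y \geq y_0 > 0$ for some constant $y_0$, one has $g(y) \leq g(y_0) < 1$, so $\mathfrak{T}_{\mathrm{eff}} \leq g(y_0)\,\tau_n/\hat{\alpha}$, a genuine constant-factor loss against the ceiling.

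Second I would show that this loss cannot be compensated by a larger $\tau_n$, by capping the optimum. Using $q_n = e^{-\tau_n}$, $\Delta_n \leq 1$, and $\log(1+y) \leq \log 2 + \max(0,\log y)$, a short calculation forces $\mathfrak{T}_{\mathrm{eff}} = o(\log n)$ whenever $\tau_n = \omega(\log n)$, so any optimal $\tau_n^{*}$ satisfies $\tau_n^{*} = O(\log n)$. Conversely, the explicit choice $\tau_n = \log n - 2\log\log n$, combined with the small-$q_n$ forms of $\Delta_n$ coming from Lemma~\ref{lem004} (valid for each of PAP, BAP and CAP under $\lambda = o(n)$), gives $y = o(1)$ and hence $\mathfrak{T}_{\mathrm{eff}} \sim \log n/\hat{\alpha}$. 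So the ceiling is essentially attained inside the region $y = o(1)$.

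Combining the two steps, any $\tau_n$ with $y = \Omega(1)$ yields $\mathfrak{T}_{\mathrm{eff}} \leq g(y_0)\,\tau_n/\hat{\alpha}$, which, since $\tau_n = O(\log n)$, is strictly dominated by the value $(1-o(1))\log n/\hat{\alpha}$ achieved by the $y = o(1)$ choice above. This forces the maximizer to lie in $\tau_n = o(n \hat{\alpha} q_n \Delta_n)$. The main obstacle is the cap on $\tau_n$: one has to exclude the scenario in which a much larger $\tau_n$ paired with a worse efficiency $g(y)$ still wins, which is where the logarithmic decay of $g$ at infinity must be weighed carefully against the at-most-logarithmic admissible growth of $\tau_n$.
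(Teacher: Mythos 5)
Your factorization $\mathfrak{T}_{\mathrm{eff}} \approx (\tau_n/\hat{\alpha})\, g(y)$ with $g(y)=\log(1+y)/y$ and $y=\tau_n/(n\hat{\alpha}q_n\Delta_n)$ is exactly the mechanism of the paper's own proof: the paper bounds $\mathfrak{T}_{\mathrm{eff}}\le \tau_n/\hat{\alpha}$ via $\log(1+x)\le x$ and notes that the ratio $g(y)$ is strictly below one unless $y=o(1)$. Where you differ is that you explicitly confront the competitor ``larger $\tau_n$ at worse efficiency $g(y)$,'' which the paper passes over in silence; this is a genuine improvement, because $\tau_n$ is not held fixed across the candidates being compared, so monotonicity of $g$ alone does not locate the maximizer.

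The one step that would fail as written is the final comparison. Your cap $\tau_n^{*}=O(\log n)$ is too weak for it: if $\tau_n=C\log n$ with $C>1/g(y_0)$, then $g(y_0)\,\tau_n/\hat{\alpha}$ exceeds $\log n/\hat{\alpha}$ and nothing is concluded. The repair comes from the same calculation you already invoke for $\tau_n=\omega(\log n)$. For any fixed $\delta>0$ and $\tau_n\ge(1+\delta)\log n$ one has $nq_n\Delta_n\le n^{-\delta}$, and since $\Delta_n\ge 1/(1+\lambda)$ with $\lambda=o(n)$ gives $\log y\le \tau_n+O(\log n)$, it follows that $\mathfrak{T}_{\mathrm{eff}}\le nq_n\Delta_n\bigl(\log 2+\max(0,\log y)\bigr)=o(1)$. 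Hence the effective cap is $\tau_n^{*}\le(1+o(1))\log n$, and then $g(y_0)(1+o(1))\log n/\hat{\alpha}$ is indeed dominated by the $(1-o(1))\log n/\hat{\alpha}$ achieved inside the region $y=o(1)$. With that tightening your argument is complete, and is in fact more rigorous than the paper's two-line justification.
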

\begin{proof}
 Rewriting the expression of the effective throughput of the network from (\ref{opio}) and noting the fact that $\log (1+x) \leq x$, for $x \geq 0$, we have
\begin{eqnarray}
 \mathfrak{T}_{\mathrm{eff}} &\approx&  n q_n \Delta_{n} \log \left( 1+\frac{\tau_n}{n\hat{\alpha}
 q_{n}\Delta_{n}}\right) \notag\\
&\leq& \frac{\tau_n}{\hat{\alpha}}.
\end{eqnarray}
It can be shown that if the condition $\tau_n =o \left( n
\hat{\alpha}q_n \Delta_n \right)$ is not satisfied, the ratio
$\frac{\log \left( 1+\frac{\tau_n}{n\hat{\alpha}
 q_{n}\Delta_{n}}\right)}{\frac{\tau_n}{n\hat{\alpha}
 q_{n}\Delta_{n}}}$ is strictly less than one. Having $\tau_n =o \left( n \hat{\alpha}q_n \Delta_n \right)$ results in $\log \left( 1+\frac{\tau_n}{n\hat{\alpha}
 q_{n}\Delta_{n}}\right) \approx \frac{\tau_n}{n\hat{\alpha}
 q_{n}\Delta_{n}}$ yielding the upper-bound $\frac{\tau_n}{\hat{\alpha}}$. This means that to achieve the
maximum throughput, the interference should not only be strong but also be much larger than $\tau_n$.
\end{proof}

\textit{Observation -} An interesting observation of Lemmas
\ref{lemma000}-\ref{lemma006} is that there is no need to have
synchronization between the users or equality of the fading blocks
(coherence time) of the channels to obtain these results. This is
due to the fact that during a transmission block (which is equal to
the fading block of the corresponding direct channel), the receiver
observes different samples of interference $I_i$ (due to
asynchronousy between the users). However, as the interference is
strong, from the result of Lemma \ref{lemma001}, all samples of
interference  asymptotically almost surely scale as $n \hat{\alpha}
q_n \Delta_n$, and hence, the receiver is still capable of decoding
the message correctly if the transmission rate is below $q_n
\Delta_n \log \left(1+\frac{\tau_n}{n \hat{\alpha} q_n \Delta_n}
\right)$. Moreover, the encoding and decoding do not need to be
performed over large number of blocks. In fact, in the blocks where
$h_{ii}^{(t)}>\tau_n$, the transmitter sends data with the rate
$\log \left(1+\frac{\tau_n}{n \hat{\alpha} q_n \Delta_n} \right)$
nats/channel use and the decoder will be able to decode the packet
information correctly.

Having the expression for the effective throughput of the network in
(\ref{opio}), in the next theorem, we find the optimum value of
$q_{n}$ (or equivalently $\tau_n$) in terms of $n$ and $\lambda$ for
the aforementioned packet arrival processes, i.e.:
\begin{equation}\label{opti01}
\hat{q}_{n}= \arg~\max_{q_{n}}~\mathfrak{T}_{\mathrm{eff}}.
\end{equation}
As shown in the proof of Lemma \ref{lemma005}, since the optimum
threshold value is much larger than one, the optimizer $\hat{q}_{n}$
is sufficiently small, i.e., $\hat{q}_{n}=o(1)$.

\begin{thm} \label{th2}\label{Th1_Ch5}
Assuming the Poisson packet arrival process and large values of $n$,
the optimum solution for (\ref{opti01}) is obtained as
\begin{equation}\label{ch4: qPAP1}
q^{PAP}_{n} = \delta \dfrac{ \log ^{2} n }{n}
\end{equation}
for some constant $\delta$. Furthermore, the maximum effective
throughput of the network asymptotically scales as $\frac{\log
n}{\hat{\alpha}}$, for $\lambda = o \left( \frac{n}{\log n}
\right)$.
\end{thm}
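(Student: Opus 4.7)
The plan is to substitute the PAP full-buffer probability from Lemma \ref{lem004} into the approximate effective throughput from Lemma \ref{lemma005} and carry out a second-order optimization in $q_n$. Since $\hat{q}_n=o(1)$, I would first Taylor-expand $\log(1-q_n)^{-1}=q_n+O(q_n^2)$ to simplify $\Delta_n^{PAP}$ to $1/(1+\lambda q_n)$, and use $\tau_n=\log(1/q_n)$.

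By Lemma \ref{lemma006}, the optimum lies in the regime where $\tau_n/(n\hat{\alpha} q_n\Delta_n)=o(1)$, so I would expand $\log(1+x)=x-x^2/2+O(x^3)$ in (\ref{opio}) to get
\begin{equation}
\mathfrak{T}_{\mathrm{eff}} \approx \frac{\tau_n}{\hat{\alpha}} - \frac{\tau_n^2(1+\lambda q_n)}{2n\hat{\alpha}^2 q_n}.
\end{equation}
This exposes the trade-off: the leading term $\tau_n/\hat{\alpha}$ grows as $q_n\to 0$, while the correction also grows without bound in the same direction. Differentiating with respect to $q_n$, using $d\tau_n/dq_n=-1/q_n$, and retaining dominant terms yields the balance $n\hat{\alpha} q_n\sim\tau_n^2/2$. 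With $\tau_n\sim\log n$, this gives $\hat{q}_n\sim\log^2 n/(2\hat{\alpha} n)$, which is the claimed form $\delta\log^2 n/n$ with $\delta=1/(2\hat{\alpha})$. Substituting back, the correction becomes $O(1)$, so $\mathfrak{T}_{\mathrm{eff}}\sim\log n/\hat{\alpha}$.

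Next I would verify consistency with Lemma \ref{lemma006} and with the hypothesis $\lambda=o(n/\log n)$. Two sub-regimes arise: if $\lambda=o(n/\log^2 n)$ then $\lambda\hat{q}_n=o(1)$ and $\Delta_n\sim 1$, while if $\lambda=\omega(n/\log^2 n)$ (but still $o(n/\log n)$) then $\lambda\hat{q}_n=\omega(1)$ and $\Delta_n\sim 1/(\lambda q_n)$. In both sub-regimes the condition $\tau_n\lambda/n=o(1)$, which is exactly the hypothesis, keeps the logarithm's argument small and preserves the $\log n/\hat{\alpha}$ scaling. A matching upper bound comes from $\log(1+x)\le x$ applied to (\ref{opio}), giving $\mathfrak{T}_{\mathrm{eff}}\le\tau_n/\hat{\alpha}$; any attempt to push $\tau_n=\omega(\log n)$ would force $nq_n\to 0$, violating the strong-interference requirement of Lemma \ref{lemma000}.

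The main technical obstacle is the simultaneous treatment of the two sub-regimes for $\lambda$, because $\Delta_n$ behaves qualitatively differently in each. Care is needed to show that the second-order analysis still identifies the same optimal scaling $\hat{q}_n\sim\log^2 n/n$ in both cases, and that the stated hypothesis $\lambda=o(n/\log n)$ is exactly what keeps the approximation $\log(1+x)\approx x$ valid at the optimum so that the $\log n/\hat{\alpha}$ throughput scaling is preserved.
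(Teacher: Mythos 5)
Your proposal is correct and follows essentially the same route as the paper's Appendix C: substitute $\Delta_n^{PAP}\approx 1/(1+\lambda q_n)$ into the Lemma \ref{lemma005} expression, work in the Lemma \ref{lemma006} regime where the logarithm can be expanded, and set the first-order optimality condition to obtain the balance $n\hat{\alpha}q_n=\Theta(\tau_n^2)$, whence $q_n=\delta\log^2 n/n$ and $\mathfrak{T}_{\mathrm{eff}}\approx\tau_n/\hat{\alpha}\approx\log n/\hat{\alpha}$ under $\lambda=o(n/\log n)$. The only difference is that you Taylor-expand the objective to second order before differentiating (giving $n\hat{\alpha}q_n\sim\tau_n^2/2$) while the paper differentiates first and then approximates (giving $n\hat{\alpha}q_n=\tau_n^2$); the factor of two is absorbed into the unspecified constant $\delta$ and does not affect the stated scaling.
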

\begin{proof}
See Appendix \ref{append02}.
\end{proof}

\begin{thm} \label{th3}\label{Th2_Ch5}
Assuming the Bernoulli packet arrival process and large values of
$n$, the optimum solution for (\ref{opti01}) is obtained as
\begin{equation}\label{ch4: qBAP1}
q^{BAP}_{n} = \delta \dfrac{\log ^{2} n }{n}
\end{equation}
for some constant $\delta$. Furthermore, the maximum effective
throughput of the network asymptotically scales as $\frac{\log
n}{\hat{\alpha}}$, for $\lambda = o \left( \frac{n}{\log n}
\right)$.
\end{thm}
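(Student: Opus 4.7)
The plan is to reduce the proof to the PAP case already handled in Theorem 2, by showing that under the operating regime forced by the earlier lemmas, the Bernoulli and Poisson full-buffer probabilities are asymptotically equivalent. By Lemma \ref{lemma005}, the effective throughput
\[
\mathfrak{T}_{\mathrm{eff}} \approx n q_n \Delta_n \log\!\left(1 + \frac{\tau_n}{n\hat{\alpha} q_n \Delta_n}\right)
\]
depends on $(\lambda, q_n)$ only through the product $q_n \Delta_n$. Consequently, any two arrival processes whose $q_n \Delta_n$ agree to leading order yield the same optimizer $\hat{q}_n$ and the same scaling for the maximum.

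First, I would observe from the proof of Lemma \ref{lemma005} that the optimal threshold must satisfy $\tau_n = \omega(1)$, hence $q_n = e^{-\tau_n} = o(1)$. For such $q_n$, a Taylor expansion gives $\log(1-q_n)^{-1} = q_n + O(q_n^2) \sim q_n$, so
\[
\Delta_n^{PAP} = \frac{1}{1+\lambda \log(1-q_n)^{-1}} \sim \frac{1}{1+\lambda q_n}, \qquad \Delta_n^{BAP} = \frac{1}{1+(\lambda-1)q_n}.
\]
If $\lambda \to \infty$, then $\lambda - 1 \sim \lambda$ and the two denominators are asymptotically identical; if $\lambda = O(1)$, then both $\Delta_n$'s equal $1 - o(1)$. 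In either case $\Delta_n^{BAP} \sim \Delta_n^{PAP}$ uniformly in the range of $q_n$ of interest, and therefore $q_n \Delta_n^{BAP} \sim q_n \Delta_n^{PAP}$.

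Second, I would substitute this equivalence into the formula above and invoke Theorem \ref{th2} verbatim: since both the multiplicative prefactor $n q_n \Delta_n$ and the argument $\tau_n/(n\hat{\alpha} q_n \Delta_n)$ of the logarithm coincide with those in the PAP analysis to leading order, the optimization in (\ref{opti01}) yields $q_n^{BAP} = \delta \log^2 n /n$ for the same constant $\delta$, with maximum effective throughput scaling as $\log n / \hat{\alpha}$ under the identical constraint $\lambda = o(n/\log n)$. The strong-interference condition $nq_n\Delta_n = \omega(1)$ and the Lemma \ref{lemma006} condition $\tau_n = o(n \hat{\alpha} q_n \Delta_n)$ are checked exactly as in Appendix \ref{append02}, because the only place the arrival process enters is through $\Delta_n$.

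The main obstacle will be justifying that the asymptotic equivalence $\Delta_n^{BAP} \sim \Delta_n^{PAP}$ is strong enough to commute with the nonlinear $\log(1+\cdot)$ in $\mathfrak{T}_{\mathrm{eff}}$ and with the $\arg\max$ operation. This is handled by noting that in the optimal regime identified by Lemma \ref{lemma006} one has $\log(1+x) \approx x$, reducing the throughput to $\tau_n/\hat{\alpha}$, a quantity that depends on the arrival process only through the feasibility constraint $\tau_n = o(nq_n\Delta_n)$ and hence is insensitive to $o(1)$ perturbations in $\Delta_n$. Once this is established, the optimization for BAP is literally the same as for PAP, and the stated conclusion follows.
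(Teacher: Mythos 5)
Your proposal is correct, and it reaches the paper's conclusion by a slightly different route. The paper's own proof (Appendix~\ref{append03}) does not reduce to the PAP case up front: it differentiates $\Delta^{BAP}_n$ from (\ref{BAP001}) exactly, verifies the identity $\Delta^{BAP}_{n}-\frac{\partial \Delta^{BAP}_{n}}{\partial \tau_{n}}=\left(\Delta_{n}^{BAP}\right)^2$ with no approximation, substitutes into the general first-order condition (\ref{result01}) of Appendix~\ref{append02}, and only then observes that the resulting stationarity equation $n\hat{\alpha}q_n=\tau_n^2$ coincides with the Poisson one. You instead note that once $q_n=o(1)$ the approximation $\log(1-q_n)^{-1}\sim q_n$ makes $\Delta_n^{PAP}\sim\frac{1}{1+\lambda q_n}$ (which is exactly the step (\ref{kir1}) the paper itself uses inside the PAP proof), so that $\Delta_n^{BAP}=\frac{1}{1+(\lambda-1)q_n}\sim\Delta_n^{PAP}$ uniformly, and you transfer Theorem~\ref{th2} wholesale. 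What your route buys is economy --- no new derivative computation and an explanation of \emph{why} the two answers must coincide; what it costs is the extra burden of arguing that $\sim$-equivalence of $\Delta_n$ survives the $\arg\max$ in (\ref{opti01}), which you discharge adequately by observing that in the regime of Lemmas~\ref{lemma005} and~\ref{lemma006} the objective collapses to $\tau_n/\hat{\alpha}$ with the arrival process entering only through the constraint $\tau_n=o(n\hat{\alpha}q_n\Delta_n)$, a constraint insensitive to a multiplicative $1+o(1)$ change in $\Delta_n$. The only nit is your claim that the \emph{same} constant $\delta$ emerges; the paper asserts only ``some constant'' (the PAP derivation already involves an approximation that fixes $\delta$ only up to $\Theta(1)$), but this is immaterial to the statement being proved.
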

\begin{proof}
See Appendix \ref{append03}.
\end{proof}

\begin{thm} \label{th4}\label{Th3_Ch5}
Assuming  a deterministic packet arrival process, the optimum
solution of (\ref{opti01}) and the corresponding maximum effective
throughput of the network are asymptotically obtained as

i) $q^{CAP}_{n} = \delta \frac{\log ^{2} n}{n}$ and $\mathfrak{T}_{\mathrm{eff}} \approx \frac{\log n}{\hat{\alpha}}$, for $\lambda =o\left(\frac{n}{\log^{2} n} \right)$,

ii) $q^{CAP}_{n} = \delta' \frac{\log ^{2} n}{n}$ and $\mathfrak{T}_{\mathrm{eff}}\approx \frac{\log n}{\hat{\alpha}}$, for $\lambda =\Theta \left(\frac{n}{\log^{2} n} \right)$,

iii) $q^{CAP}_{n}=\frac{ \log \left(\frac{\lambda \log^{2} \lambda}{n \hat{\alpha}} \right)}{\lambda}$ and $\mathfrak{T}_{\mathrm{eff}}\approx \frac{\log n}{\hat{\alpha}}$, for $\lambda =\omega\left(\frac{n}{\log^{2} n} \right)$ and $\lambda=o \left(\frac{n}{\log n} \right)$,

for some constants $\delta$ and $\delta'$.
\end{thm}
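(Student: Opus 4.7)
My plan is to reduce the three-way case analysis to a single optimization and then specialize. Starting from the effective-throughput expression of Lemma \ref{lemma005},
\begin{equation*}
\mathfrak{T}_{\mathrm{eff}} \approx n q_n \Delta_n^{CAP} \log\left(1 + \frac{\tau_n}{n \hat{\alpha} q_n \Delta_n^{CAP}}\right),
\end{equation*}
and using Lemma \ref{lemma006}, the optimum must lie in the regime $\tau_n = o(n\hat{\alpha} q_n \Delta_n^{CAP})$, in which $\mathfrak{T}_{\mathrm{eff}} \approx \tau_n/\hat{\alpha}$. Thus the problem collapses to: \emph{maximize $\tau_n = \log(1/q_n)$ subject to the strong-interference constraint $\tau_n = o(n \hat{\alpha}\, q_n \Delta_n^{CAP})$.} Substituting $q_n \Delta_n^{CAP} = \frac{1-(1-q_n)^\lambda}{\lambda}$ from Lemma \ref{lem004} and noting that for the small values of $q_n$ of interest $(1-q_n)^\lambda \approx e^{-\lambda q_n}$, the constraint becomes
\begin{equation*}
\tau_n \;=\; o\!\left(\frac{n\hat{\alpha}\,(1-e^{-\lambda q_n})}{\lambda}\right).
\end{equation*}
The behavior now depends entirely on $s \triangleq \lambda q_n$, which controls whether the buffer is typically empty ($s\to 0$) or typically refilled before service ($s\to\infty$).

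\textbf{Regime (i): $\lambda = o(n/\log^2 n)$.} Here I will try $q_n = \delta \log^2 n/n$, giving $s = \delta \lambda \log^2 n/n = o(1)$, so $(1-e^{-s})/s \to 1$ and $q_n \Delta_n^{CAP} \sim q_n$. The constraint reduces to $\tau_n = o(n\hat{\alpha} q_n) = o(\hat{\alpha}\delta \log^2 n)$, which is satisfied since $\tau_n \sim \log n$. This yields $\mathfrak{T}_{\mathrm{eff}} \approx \tau_n/\hat\alpha \approx \log n/\hat\alpha$. To show optimality, I observe that any smaller $q_n$ (i.e., $q_n = \omega(\log^2 n/n)$) would still be compatible with the constraint only as long as $\tau_n \ll n q_n$, and since $\tau_n$ can at most scale as $\log n$ (because $q_n$ must be polynomial in $1/n$ for the interference to remain strong), no improvement in the leading order of $\mathfrak{T}_{\mathrm{eff}}$ is possible.

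\textbf{Regime (ii): $\lambda = \Theta(n/\log^2 n)$.} This is the boundary case where $s = \lambda q_n = \Theta(\delta')$ is an order-one constant. Now $q_n \Delta_n^{CAP} = (1-e^{-s})/\lambda = \Theta(1/\lambda) = \Theta(\log^2 n/n)$, and the constraint is $\tau_n = o(\hat{\alpha}(1-e^{-s})\log^2 n)$. Again $\tau_n \sim \log n$ is admissible, and the same throughput $\log n/\hat\alpha$ is obtained; only the implicit constant $\delta'$ in $q_n = \delta'\log^2 n/n$ changes (because $(1-e^{-s})/s$ is no longer $\approx 1$).

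\textbf{Regime (iii): $\lambda = \omega(n/\log^2 n)$ and $\lambda = o(n/\log n)$.} This is the main obstacle, because we can no longer take $q_n = \Theta(\log^2 n/n)$ (it would force $s\to\infty$, but then $\tau_n = o(n\hat\alpha/\lambda)$ must also hold, and we want $\tau_n$ maximized). I will write $q_n = s/\lambda$ and treat $s$ as the free variable, rewriting the constraint as $\log(\lambda/s) = o(n\hat\alpha(1-e^{-s})/\lambda)$. Since the throughput is $\tau_n/\hat\alpha = (\log \lambda - \log s)/\hat\alpha$, we want $s$ just large enough that $(1-e^{-s})\to 1$ (so that the constraint is maximally loose) but otherwise as small as possible to keep $\log s$ negligible compared to $\log\lambda$. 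The choice $s = \log(\lambda \log^2\lambda/(n\hat\alpha))$ makes $s = \omega(1)$ under the hypothesis $\lambda = \omega(n/\log^2 n)$, so $1-e^{-s} \to 1$; the constraint becomes $\log\lambda \sim n\hat\alpha/\lambda$-comparable, which is satisfied for $\lambda = o(n/\log n)$. Moreover $\log s$ is at most $\log\log n$-order, hence $\tau_n = \log\lambda - \log s \sim \log\lambda \sim \log n$ (using $\lambda=\Theta(n/\mathrm{polylog}\,n)$), yielding again $\mathfrak{T}_{\mathrm{eff}}\approx \log n/\hat\alpha$. The optimality argument is that decreasing $s$ further would violate $(1-e^{-s})\to 1$ and tighten the constraint, while increasing $s$ only shaves off a $\log s$ term from $\tau_n$ without loosening the leading order; the particular logarithm in the statement is essentially the smallest $s$ that still satisfies the constraint with room to spare. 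The trickiest verification will be ensuring that in this regime $\tau_n$ is exactly $o(n\hat\alpha q_n \Delta_n^{CAP})$ with the chosen $s$, which I will do by plugging the explicit expression back in and showing the ratio tends to zero.
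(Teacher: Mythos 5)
Your proposal is essentially correct but follows a genuinely different route from the paper. The paper differentiates $\mathfrak{T}_{\mathrm{eff}}$ with respect to $\tau_n$, arrives at the stationarity condition $n\hat{\alpha}q_n\Delta_n^2=(\Delta_n-\partial\Delta_n/\partial\tau_n)\tau_n^2$, specializes it to the CAP as $\frac{\nu\log\nu^{-1}}{(1-\nu)^2}=\Psi$ with $\nu\triangleq e^{-\lambda q_n}$ and $\Psi\triangleq\frac{n\hat{\alpha}}{\tau_n^2\lambda}$, and then case-splits on $\Psi\gg 1$, $\Psi=\Theta(1)$, $\Psi\ll 1$ (which is exactly your split on $s=\lambda q_n$ being $o(1)$, $\Theta(1)$, or $\omega(1)$, since $\Psi$ and $s$ are tied together through the stationarity equation); the Lemma \ref{lemma006} condition is then verified a posteriori in each case, yielding the $\lambda=o(n/\log n)$ restriction in Case 3. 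You instead invoke Lemma \ref{lemma006} up front to collapse the problem to ``maximize $\tau_n=\log(1/q_n)$ subject to $\tau_n=o(n\hat{\alpha}q_n\Delta_n^{CAP})$,'' which is cleaner and avoids solving the implicit equation; the price is that you only establish that the stated $q_n^{CAP}$ achieves the optimal scaling $\log n/\hat{\alpha}$ (together with the matching upper bound $\tau_n\lesssim\log n$ forced by $nq_n\Delta_n\to\infty$), rather than exhibiting it as the stationary point of (\ref{opti01}) — in regime (iii), for instance, your own argument shows $s=\Theta(1)$ would also meet the constraint and give the same leading order, so the specific logarithmic form of $q_n^{CAP}$ is not singled out by your reasoning, whereas it drops out of the paper's equation $\nu\log\nu^{-1}\approx\Psi$. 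Since the theorem is an asymptotic statement, this is an acceptable weakening, but you should state explicitly that you are proving scaling-optimality. Two small points to fix: in regime (i) you write ``any smaller $q_n$ (i.e., $q_n=\omega(\log^2 n/n)$)'' — $\omega$ means larger, and the correct upper-bound argument is that the constraint forces $nq_n\to\infty$, hence $q_n=\omega(1/n)$ and $\tau_n=\log(1/q_n)<\log n$; and in regime (iii) you do need to carry out the deferred verification that $\tau_n/(n\hat{\alpha}q_n\Delta_n^{CAP})\approx\lambda\log\lambda/(n\hat{\alpha})=o(1)$, which is precisely where the condition $\lambda=o(n/\log n)$ enters (the paper does this at the end of its Case 3).
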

\begin{proof}
See Appendix \ref{append04}.
\end{proof}


The above theorems imply that the effective throughput of the
network scales as $\frac{\log n}{\hat{\alpha}}$, regardless of the
packet arrival process. Note that this value is the same as the
sum-rate scaling of the same network with backlogged users
\cite{JamshidITIT2008}, which is an upper-bound on the effective
throughput of the current setup. In other words, the effect of the
real-time traffic in the throughput (which is captured in the full
buffer probability) is asymptotically negligible. However, we did
not consider the effect of dropping on the calculations. In the
subsequent section, we include the dropping probability in the
analysis and find the maximum effective throughput of the network
such that the dropping probability approaches zero.

\section{Delay Analysis} \label{secdelay1}
In this section, we first formulate the packet dropping probability
in the underlying network in terms of the number of links ($n$) and
$\lambda$ for the aforementioned packet arrival processes. Then, we
derive the sufficient conditions for the delay-bound ($\lambda$) in
the asymptotic case of $n \to \infty$ such that the packet dropping
probabilities tend to zero, while achieving the maximum effective
throughput of the network.

\begin{lem} \label{lem003} Let us denote the packet dropping probability of a link $i$,
$i \in \mathbb{N}_{n}$, for the Poisson, Bernoulli and constant arrival processes as
$\mathbb{P} \left \{\mathscr{B}^{PAP}_{i} \right \}$, $\mathbb{P} \left \{\mathscr{B}^{BAP}_{i} \right \}$
and $\mathbb{P} \left \{\mathscr{B}^{CAP}_{i} \right \}$, respectively. Then,
\begin{eqnarray}
\label{eqpoisson1} \mathbb{P} \left \{\mathscr{B}^{PAP}_{i} \right \}& =& \dfrac{1}{1+\lambda \log (1-q_{n})^{-1}},\\
\label{eqBernou1} \mathbb{P} \left \{\mathscr{B}^{BAP}_{i} \right \}& =&\dfrac{(1-q_{n})(\lambda q_{n})^{-1}}{1+(1-q_{n})(\lambda q_{n})^{-1}},\\
\label{eq001} \mathbb{P} \left \{\mathscr{B}^{CAP}_{i} \right \} &=& (1-q_{n})^{\lambda}.
\end{eqnarray}
\end{lem}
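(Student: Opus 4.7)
The plan is to reduce all three formulas to a single template, namely
\[
\mathbb{P}\{\mathscr{B}_i\}=\mathbb{E}_x\bigl[(1-q_n)^{x}\bigr],
\]
and then carry out the three specific averages using the densities/pmf already listed in the excerpt. To set up the template, I would first observe that once a packet is in the buffer, the only way it can depart is if, in some transmission block before the next arrival, the direct channel gain exceeds $\tau_n$. Under the block Rayleigh fading assumption these ``above threshold'' events are i.i.d.\ Bernoulli$(q_n)$ across blocks (with $q_n=e^{-\tau_n}$), so conditional on an inter-arrival length $x$ (measured in blocks), the probability that the packet fails to be served is exactly $(1-q_n)^{x}$. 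Coupling this with (\ref{eqn: event1}) and the integral/sum representations (\ref{eqn: 03})--(\ref{eqn: CAP}) gives the common template.

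The three cases are then straightforward routine computations. For the \textbf{CAP}, $x=\lambda$ is deterministic, so $\mathbb{P}\{\mathscr{B}^{CAP}_i\}=(1-q_n)^{\lambda}$, producing (\ref{eq001}). For the \textbf{BAP} I would substitute $p_X(m)=(1-\rho)^{m-1}\rho$ with $\rho=1/\lambda$ into (\ref{eqn: BAP1}) and evaluate the geometric series
\[
\sum_{m=1}^{\infty}(1-q_n)^{m}(1-\rho)^{m-1}\rho
=\frac{(1-q_n)\rho}{1-(1-q_n)(1-\rho)}
=\frac{(1-q_n)\rho}{q_n+\rho(1-q_n)},
\]
which after dividing numerator and denominator by $q_n$ and using $\rho=1/\lambda$ becomes exactly (\ref{eqBernou1}). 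For the \textbf{PAP} I would substitute $f_X(x)=\frac{1}{\lambda}e^{-x/\lambda}$ into (\ref{eqn: 03}) and compute
\[
\int_{0}^{\infty}(1-q_n)^{x}\,\tfrac{1}{\lambda}e^{-x/\lambda}\,dx
=\frac{1}{\lambda\bigl[1/\lambda+\log(1-q_n)^{-1}\bigr]}
=\frac{1}{1+\lambda\log(1-q_n)^{-1}},
\]
yielding (\ref{eqpoisson1}). Note that the computation is essentially the same one that produced $\Delta_n$ in Lemma~\ref{lem004}; in particular the PAP answer coincides with $\Delta^{PAP}_n$, a consequence of the memorylessness of the exponential inter-arrival distribution.

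The only conceptual subtlety, and thus the main thing to be careful about, is the passage from discrete channel blocks to the continuous PAP inter-arrival variable. The expression $(1-q_n)^{x}$ for non-integer $x$ must be interpreted as the success/failure probability accumulated from the blocks overlapping $[t_{A_k},t_{A_{k+1}})$. I would justify this by conditioning on the position of the arrival epoch inside a block and invoking the memorylessness of the exponential to show that the distribution of the remaining service opportunity has precisely the form producing the factor $(1-q_n)^{x}$ after averaging, so no approximation is introduced. All the remaining steps are elementary series and integral manipulations.
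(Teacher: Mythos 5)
Your proposal is correct and follows essentially the same route as the paper: both reduce all three cases to the template $\mathbb{P}\{\mathscr{B}_i\}=\mathbb{E}\bigl[(1-q_n)^{x^{(i)}_k}\bigr]$ (the paper's equation (\ref{eqn: 03})ff.) and then evaluate the exponential integral, the geometric series, and the deterministic case exactly as you do. Your extra remarks --- that the PAP answer coincides with $\Delta^{PAP}_n$ by memorylessness, and the care needed in interpreting $(1-q_n)^{x}$ for non-integer $x$ --- go slightly beyond the paper's proof, which passes over both points silently, but they do not change the argument.
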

\begin{proof}
Recalling $t_{A_{k}}^{(i)}$ as the time instant of the $k^{th}$
packet arrival into the buffer of link $i$, each user $i$ is active
at time slot $t \geq t_{A_{k}}^{(i)}$ only when
$h^{(t)}_{ii}>\tau_{n}$. In other words, assuming the buffer is
full, no transmission (or no service) occurs in each slot with
probability $1-q_{n}$. From (\ref{inter01}) and (\ref{eqn:
event01})-(\ref{eqn: BAP1}), since the time duration  between
subsequent packet arrivals is $x^{(i)}_{k}$, the packet dropping
probability for a link $i$ is obtained as
\begin{equation}\label{linkbl}
\mathbb{P} \left \{\mathscr{B}_{i} \right \} =\mathbb{E} \left[( 1-q_{n})^{x^{(i)}_{k}} \right],
\end{equation}
where the expectation is computed with respect to $x^{(i)}_{k}$. For
the PAP, since $x^{(i)}_{k}$ is an exponential random variable,
(\ref{linkbl}) can be simplified as
\begin{eqnarray}
 \mathbb{P} \left \{\mathscr{B}^{PAP}_{i} \right \} &=& \int_{0}^{\infty}\dfrac{1}{\lambda}(1-q_{n})^{x}e^{-\frac{1}{\lambda}x} dx\\
\label{dropping2}&=& \dfrac{1}{1+\lambda \log (1-q_{n})^{-1}}.
\end{eqnarray}

Also for the BAP, $x^{(i)}_{k}$ is a geometric random variable with
parameter $\rho=\dfrac{1}{\lambda}$. Thus, (\ref{linkbl}) can be
simplified as
\begin{eqnarray}
\label{eqn: BAP2}\mathbb{P} \left \{\mathscr{B}^{BAP}_{i} \right \} &=& \sum_{m=1}^{\infty}(1-q_{n})^{m}\rho(1-\rho)^{m-1}\\
&=& \dfrac{\rho}{1-\rho}\sum_{m=1}^{\infty}\left [(1-q_{n})(1-\rho)\right]^{m}\\
\label{dropping3}&\stackrel{(a)}{=} & \dfrac{(1-q_{n})(\lambda q_{n})^{-1}}{1+(1-q_{n})(\lambda q_{n})^{-1}},
\end{eqnarray}
where $(a)$ comes from the following geometric series:
\begin{equation}
\sum_{m=1}^{\infty}x^m=\dfrac{x}{1-x},~~~~~~~|x|<1.
\end{equation}

According to Fig. \ref{fig: Delay1}-a, $x^{(i)}_{k}$ for the CAP is
a deterministic quantity and is equal to $\lambda$. Thus, we have
\begin{equation}\label{block02}
\mathbb{P} \left \{\mathscr{B}^{CAP}_{i} \right \} = (1-q_{n})^{\lambda}.
\end{equation}

It should be noted that (\ref{dropping2}), (\ref{dropping3}) and
(\ref{block02}) are valid for every value of $q_{n} \in [0,1]$. In
particular, in the extreme case of $q_{n}=1$, $\mathbb{P} \left
\{\mathscr{B}^{CAP}_{i} \right \}=\mathbb{P} \left
\{\mathscr{B}^{PAP}_{i} \right \}=\mathbb{P} \left
\{\mathscr{B}^{BAP}_{i} \right \}=0$.
\end{proof}

We are now ready to prove the main result of this section. In the
next theorem, we derive the sufficient conditions on  $\lambda$,
such that the corresponding packet dropping probabilities tend to
zero, while achieving the maximum effective throughput of the
network.

\begin{thm} \label{th001} For the optimum $q_{n}$ obtained in Theorems \ref{th2}-\ref{th4}
resulting in the maximum effective throughput of the network,

i) $\lim_{n \to \infty} \mathbb{P} \left \{\mathscr{B}^{PAP}_{i}\right\}=0$, if
$\lambda^{PAP} = \omega \left(\frac{n}{\log^{2} n} \right)$ and
$\lambda^{PAP} = o \left(\frac{n}{\log n} \right)$,

ii) $\lim_{n \to \infty} \mathbb{P} \left \{\mathscr{B}^{BAP}_{i}\right\}=0$, if
$\lambda^{BAP} = \omega \left(\frac{n}{\log^{2} n} \right)$ and
$\lambda^{BAP} = o \left(\frac{n}{\log n} \right)$,

iii) $\lim_{n \to \infty} \mathbb{P} \left \{\mathscr{B}^{CAP}_{i}\right\}=0$, if $\lambda^{CAP} = \omega \left(\frac{n}{\log^{2} n} \right)$ and $\lambda^{CAP} = o \left(\frac{n}{\log n} \right)$.

\end{thm}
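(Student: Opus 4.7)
The plan is to exploit Theorems \ref{th2}--\ref{th4}, which already pin down the optimizer $q_n$ in each regime, and simply substitute these values into the closed-form dropping probabilities derived in Lemma \ref{lem003}. Since $q_n = o(1)$ in every case of interest (as noted right after \eqref{opti01}), the main analytic tool will be the asymptotic expansion $\log(1-q_n)^{-1} = q_n + O(q_n^2)$ together with $(1-q_n)^\lambda \approx e^{-\lambda q_n}$, so the problem reduces to determining when the product $\lambda q_n$ diverges.

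First I would handle the Poisson case. Theorem \ref{th2} gives $q_n^{PAP} = \delta \log^2 n / n$ whenever $\lambda = o(n/\log n)$, which is assumed. Plugging into \eqref{eqpoisson1} and using $\log(1-q_n)^{-1} \sim q_n$ yields
\[
\mathbb{P}\{\mathscr{B}^{PAP}_i\} \sim \frac{1}{1+\delta\,\lambda \log^2 n / n},
\]
so the hypothesis $\lambda = \omega(n/\log^2 n)$ forces the denominator to diverge, and the probability vanishes. The Bernoulli case is structurally identical: Theorem \ref{th3} supplies the same $q_n^{BAP} = \delta \log^2 n / n$, and in \eqref{eqBernou1} the factor $(1-q_n)(\lambda q_n)^{-1}$ equals $(1+o(1))\cdot n/(\delta \lambda \log^2 n)$, which tends to zero under the same condition, so the dropping probability tends to zero as well.

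The constant arrival case requires slightly more care because Theorem \ref{th4} lists three sub-regimes for $q_n^{CAP}$. However, the two hypotheses $\lambda = \omega(n/\log^2 n)$ and $\lambda = o(n/\log n)$ single out sub-case (iii), so the relevant optimizer is
\[
q_n^{CAP} = \frac{\log\bigl(\lambda \log^2 \lambda /(n\hat{\alpha})\bigr)}{\lambda}.
\]
Substituting into \eqref{eq001} and using $(1-q_n)^\lambda = \exp\bigl(\lambda \log(1-q_n)\bigr) \sim e^{-\lambda q_n}$,
\[
\mathbb{P}\{\mathscr{B}^{CAP}_i\} \sim \exp\!\left(-\log\frac{\lambda \log^2 \lambda}{n\hat{\alpha}}\right) = \frac{n\hat{\alpha}}{\lambda \log^2 \lambda}.
\]
Since $\lambda = \omega(n/\log^2 n)$ and $\log \lambda \sim \log n$ in the window $\lambda = o(n/\log n)$, the right-hand side is $o(1)$, completing the claim.

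The main technical obstacle is not any single computation but rather bookkeeping: one must verify that the window $\lambda = \omega(n/\log^2 n) \cap o(n/\log n)$ is simultaneously (a) the only CAP sub-regime that can possibly drive $(1-q_n)^\lambda$ to zero (cases (i) and (ii) fix $q_n \sim \log^2 n/n$, yielding $\lambda q_n = O(1)$ and hence a non-vanishing dropping probability), (b) still compatible with the throughput scaling $\mathfrak{T}_{\mathrm{eff}} \sim \log n/\hat{\alpha}$ asserted in Theorems \ref{th2}--\ref{th4}, and (c) consistent with the small-$q_n$ expansions used above, i.e.\ $q_n^{CAP} \to 0$, which follows from $\log(\lambda\log^2\lambda/(n\hat{\alpha})) = o(\lambda)$ in the prescribed window. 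Once these consistency checks are in place, the limits $\mathbb{P}\{\mathscr{B}_i\} \to 0$ are immediate from the displayed asymptotics above.
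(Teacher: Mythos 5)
Your proposal is correct and follows essentially the same route as the paper: substitute the optimizers $q_n$ from Theorems \ref{th2}--\ref{th4} into the closed-form dropping probabilities of Lemma \ref{lem003}, use $\log(1-q_n)^{-1}\sim q_n$ and $(1-q_n)^{\lambda}\approx e^{-\lambda q_n}$, and observe that $\lambda=\omega(n/\log^2 n)$ forces $\lambda q_n=\omega(1)$ while $\lambda=o(n/\log n)$ preserves the throughput scaling. The paper phrases the PAP/BAP steps via the inverse relation $\lambda_{\epsilon}\approx \epsilon^{-1}/q_n$ rather than a direct limit, but this is only a cosmetic difference; your explicit identification of CAP sub-case (iii) and the check that sub-cases (i)--(ii) give $\lambda q_n=O(1)$ match what the paper does implicitly.
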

\begin{proof}
i) From (\ref{eqpoisson1}), we have
\begin{eqnarray}
\label{drop_PAP1}\mathbb{P}\left\{\mathscr{B}^{PAP}_{i} \right\} &=& \dfrac{1}{1-\lambda^{PAP} \log (1-q^{PAP}_{n})}.
\end{eqnarray}
It follows from (\ref{drop_PAP1}) that achieving
$\mathbb{P}\left\{\mathscr{B}^{PAP}_{i}\right\}=\epsilon$ results in
\begin{eqnarray}\label{PAP_epsilon1}
\lambda^{PAP}_{\epsilon} &=& \dfrac{1-\epsilon^{-1}}{\log (1-q^{PAP}_{n})} \notag\\
&\stackrel{(a)}{\approx}& \frac{\epsilon^{-1}-1}{q^{PAP}_{n}},
\end{eqnarray}
where $(a)$ comes from $q^{PAP}_{n}=o(1)$ and the approximation
$\log(1-z) \approx -z, \quad |z| \ll 1$. Noting the fact that the
optimum value of $q^{PAP}_{n}$ scales as $\Theta \left( \frac{\log^2
n}{n}\right)$, having $\lambda^{PAP} = \omega
\left(\frac{n}{\log^{2} n} \right)$ results in $\lim_{n \to \infty}
\mathbb{P}\left\{\mathscr{B}^{PAP}_{i}\right\}=0$. On the other
hand, from Theorem \ref{Th1_Ch5}, the condition $\lambda^{PAP} = o
\left(\frac{n}{\log n} \right)$ is required to achieve the maximum
$\mathfrak{T}_{\mathrm{eff}}$, and this completes the proof of the
first part of the Theorem.

ii) It is realized from (\ref{eqBernou1}) that achieving
$\mathbb{P}\left\{\mathscr{B}^{BAP}_{i}\right\}=\epsilon$ results in
\begin{eqnarray}\label{BAP_epsilon1}
\lambda^{BAP}_{\epsilon}&=&\dfrac{1}{q^{BAP}_{n}}\left[ (1-q^{BAP}_{n})\epsilon^{-1}-(1-q^{BAP}_{n}) \right] \notag\\
&\approx& \frac{\epsilon^{-1}}{q^{BAP}_{n}},
\end{eqnarray}
for small enough $\epsilon$. Noting the fact that the optimum value
of $q^{BAP}_{n}$ scales as $\Theta \left( \frac{\log^2
n}{n}\right)$, having $\lambda^{BAP} = \omega
\left(\frac{n}{\log^{2} n} \right)$ results in $\lim_{n \to \infty}
\mathbb{P}\left\{\mathscr{B}^{BAP}_{i}\right\}=0$. On the other
hand, from Theorem \ref{Th2_Ch5}, $\lambda^{BAP} = o
\left(\frac{n}{\log n} \right)$ guarantees achieving the maximum
effective throughput of the network.

iii) From (\ref{eq001}), we have
\begin{eqnarray}
\mathbb{P}\left\{\mathscr{B}^{CAP}_{i}\right\} &=& e^{\lambda^{CAP}\log(1-q^{CAP}_{n})}\\
\label{e3}&\stackrel{(a)}{\approx} & e^{-q^{CAP}_{n}\lambda^{CAP}}
\end{eqnarray}
where $(a)$ follows from $\log(1-z) \approx -z, \quad |z| \ll 1$ for
$q^{CAP}_{n} = o(1)$. To achieve
$\mathbb{P}\left\{\mathscr{B}^{CAP}_{i}\right\}=\epsilon$, we must
have
\begin{equation}\label{CAP_epsilon1}
\lambda^{CAP}_{\epsilon}=\dfrac{1}{q^{CAP}_{n}}\log \epsilon^{-1}.
\end{equation}
It follows from (\ref{e3}) that setting
$q^{CAP}_{n}\lambda^{CAP}=\omega(1)$ makes
$e^{-q^{CAP}_{n}\lambda^{CAP}} \rightarrow 0$. Using part $(iii)$ in
Theorem \ref{Th3_Ch5}, it turns out that choosing $\lambda^{CAP}=
\omega \left(\frac{n}{\log^{2} n} \right)$ satisfies
$q^{CAP}_{n}\lambda^{CAP}=\omega(1)$ which yields $\lim_{n \to
\infty} \mathbb{P}\left\{\mathscr{B}^{CAP}_{i}\right\}=0$. We also
need the condition $\lambda^{CAP} = o \left(\frac{n}{\log n}
\right)$ to ensure achieving the maximum effective throughput of the
network.
\end{proof}

\textit{Remark 1-} It is worth mentioning that the delay-bound
($\lambda$) in each link for the CAP scales the same as that of for
the PAP and BAP. However,
$\mathbb{P}\left\{\mathscr{B}^{CAP}_{i}\right\}$ decays faster than
$\mathbb{P}\left\{\mathscr{B}^{PAP}_{i}\right\}$ and
$\mathbb{P}\left\{\mathscr{B}^{BAP}_{i}\right\}$ in terms of
$\lambda$, when $n$ tends to infinity (exponentially versus
linearly).

An interesting conclusion of Theorem \ref{th001} is the possibility
of achieving the maximum effective throughput of the network while
making the dropping probability approach zero. More precisely, there
exists some $\epsilon \ll 1$ such that
$\mathbb{P}\left\{\mathscr{B}_{i}\right\} \leq \epsilon$, $\forall i
\in \mathbb{N}_n$, while achieving the maximum
$\mathfrak{T}_{\rm{eff}}$ of $\frac{\log n}{\hat{\alpha}}$. This is
true for all aforementioned arrival processes. However, for
arbitrary values of $\epsilon$, there is a tradeoff between
increasing the throughput, and decreasing the dropping probability
and the delay-bound ($\lambda$). This tradeoff is studied in the
next section.


\section{Throughput-Delay-Dropping Probability Tradeoff}\label{TDT}
In this section, we study the tradeoff between the effective
throughput of the network and other performance measures, i.e., the
dropping probability and the delay-bound ($\lambda$) for different
packet arrival processes. In particular, we would like to know how
much degradation will be enforced in the throughput by introducing
the other constraints, and how much this degradation depends on the
packet arrival process.

\subsection{Tradeoff Between Throughput and Dropping Probability}
In this section, we assume that  a constraint
$\mathbb{P}\left\{\mathscr{B}_{i}\right\} \leq \epsilon$ must be
satisfied for the dropping probability. It can be easily shown that
the constraint $\mathbb{P}\left\{\mathscr{B}_{i}\right\} \leq
\epsilon$ is equivalent to $\mathbb{P}\left\{\mathscr{B}_{i}\right\}
= \epsilon$. The aim is to characterize the degradation on the
effective throughput of the network in terms of $\epsilon$ for
different packet arrival processes. First, we consider PAP.

Looking at the equations (\ref{PAP001}) and (\ref{eqpoisson1}), it
turns out that $\mathbb{P}\left\{\mathscr{B}_{i}^{PAP}\right\} =
\Delta_n^{PAP}$. Hence, the condition
$\mathbb{P}\left\{\mathscr{B}_{i}^{PAP}\right\} = \epsilon$ is
translated to $ \Delta_n^{PAP} = \epsilon$. Therefore, using
(\ref{opio}), the effective throughput of the network can be written
as
\begin{eqnarray}
 \mathfrak{T}_{\rm{eff}} \approx n q_n \epsilon \log \left( 1+ \frac{\tau_n}{n \hat{\alpha} q_n \epsilon} \right).
\end{eqnarray}
From the above equation, it can be realized that the effective
throughput of the network is equal to the average sum-rate of the
network with $n \epsilon$ users in the case of backlogged users,
which is given  in \cite{JamshidITIT2008} as $\frac{\log (n
\epsilon)}{\hat{\alpha}}$ for the case of $n \epsilon \gg 1$ or
$\epsilon = \omega (\frac{1}{n})$. Also, the optimum value of $q_n$
is shown to scale as $\delta  \frac{\log^2(n \epsilon)}{n \epsilon}$
for some constant $\delta$ and hence, the optimum value of $\lambda$
is given as $\frac{\epsilon^{-1}}{q_n} =\frac{n}{\delta \log^2(n
\epsilon)}$. Let us denote $\Delta \mathfrak{T}_{\rm{eff}}$ as the
degradation in the effective throughput of the network, which is
defined as the difference between the maximum effective throughput
in the case of no constraint on
$\mathbb{P}\left\{\mathscr{B}_{i}\right\}$ (Theorem
\ref{th2}-\ref{th4}) and the case with constraint on
$\mathbb{P}\left\{\mathscr{B}_{i}\right\}$. Using Theorem \ref{th2},
$\Delta \mathfrak{T}_{\rm{eff}}$ for the PAP can be written as
\begin{eqnarray}
 \Delta \mathfrak{T}_{\rm{eff}} &\approx& \frac{\log n}{\hat{\alpha}} - \frac{\log (n \epsilon)}{\hat{\alpha}} \notag\\
&=& \frac{\log (\epsilon^{-1})}{\hat{\alpha}}, \label{dool1}
\end{eqnarray}
for $\epsilon = \omega \left( \frac{1}{n} \right)$\footnote{In the
case of $\epsilon = O (\frac{1}{n})$, it is easy to see that the
effective throughput of the network does not scale with $n$.}.
Moreover, for values of $\epsilon$ such that $\log (\epsilon^{-1}) =
o (\log n)$, it can be shown that the scaling of the effective
throughput of the network is not changed, i.e.,
$\mathfrak{T}_{\rm{eff}} \sim \frac{\log n}{\hat{\alpha}} $.

For the BAP, and using (\ref{BAP001}) and (\ref{eqBernou1}), we have
\begin{eqnarray}
 \mathbb{P}\left\{\mathscr{B}_{i}^{BAP}\right\} &=& \frac{1-q_n}{1+(\lambda-1)q_n} \notag\\
&\stackrel{(a)}{\approx}& \frac{1}{1+(\lambda-1)q_n} \notag\\
&=& \Delta^{BAP}_n,
\end{eqnarray}
where $(a)$ follows from the fact that $q_n = o(1)$. Therefore,
similar to the case of the PAP, we have
$\mathbb{P}\left\{\mathscr{B}_{i}^{BAP}\right\} \approx
\Delta^{BAP}_n =\epsilon$ and as a result, the rest of the arguments
hold. In particular,
\begin{eqnarray}
\Delta \mathfrak{T}_{\rm{eff}} &\approx& \frac{\log (\epsilon^{-1})}{\hat{\alpha}}.
\end{eqnarray}

For the CAP, and using (\ref{CAP001}) and (\ref{eq001}), we have
\begin{eqnarray}
(1-q_n)^{\lambda} = \epsilon \quad \Longrightarrow \lambda q_n \approx \log (\epsilon^{-1}),
\end{eqnarray}
which gives
\begin{eqnarray}
\Delta^{CAP}_n &=& \frac{1-(1-q_n)^{\lambda}}{\lambda q_n} \\
&\approx&  \frac{1}{\log (\epsilon^{-1})}.
\end{eqnarray}
Hence, using (\ref{opio}), the effective throughput of the network
can be written as
\begin{eqnarray}
\mathfrak{T}_{\rm{eff}} \approx \frac{n}{\log (\epsilon^{-1})} q_n  \log \left( 1+ \frac{\tau_n}{\frac{n}{\log (\epsilon^{-1})} \hat{\alpha} q_n } \right),
\end{eqnarray}
which is equal to the average sum-rate of a network with
$\frac{n}{\log (\epsilon^{-1})}$ backlogged users and is
asymptotically equal to $\frac{\log \left( \frac{n}{\log
(\epsilon^{-1})}\right)}{\hat{\alpha}}$, for values of $\epsilon$
satisfying $\log (\epsilon^{-1}) = o(n)$. Therefore, the degradation
in the effective throughput of the network for the CAP can be
expressed as
\begin{eqnarray}
\Delta \mathfrak{T}_{\rm{eff}} &\approx& \frac{\log n}{\hat{\alpha}} - \frac{\log \left( \frac{n}{\log (\epsilon^{-1})}\right)}{\hat{\alpha}} \notag\\
&=& \frac{\log \log (\epsilon^{-1})}{\hat{\alpha}}. \label{dool3}
\end{eqnarray}
Comparing the expressions of $\Delta \mathfrak{T}_{\rm{eff}}$ for
the Poisson, Bernoulli and constant packet arrival processes, it
follows that the degradation in the effective throughput of the
network in the cases of PAP and BAP both grow logarithmically with
$\epsilon^{-1}$, while in the case of CAP it grows double
logarithmically. In other words, the degradation in the throughput
in the cases of the PAP and BAP is much more substantial compared to
the CAP. This fact is also observed in the simulation results in the
next section.

\subsection{Tradeoff Between Throughput and Delay}
In this section, we aim to find the tradeoff between the effective
throughput of the network and the delay-bound ($\lambda$), for a
given constraint on the dropping probability, i.e.,
$\mathbb{P}\left\{\mathscr{B}_{i}\right\} \leq \epsilon$.

\subsubsection{PAP}
Using (\ref{PAP001}) and (\ref{eqpoisson1}), it follows that for a
given $\lambda$ and $\epsilon \ll 1$, we have
\begin{eqnarray}
 q_n &\approx& \frac{\epsilon^{-1}}{\lambda}, \notag\\
\Longrightarrow \tau_n &\approx& \log (\lambda \epsilon),
\end{eqnarray}
and
\begin{eqnarray}
q_n \Delta_n &\approx& \frac{1}{\lambda}.
\end{eqnarray}
Substituting $q_n \Delta_n$ and $\tau_n$ from the above equations in
(\ref{opio}) yields
\begin{eqnarray}
 \mathfrak{T}_{\rm{eff}} \approx \frac{n}{\lambda} \log \left(1+ \frac{\lambda \log (\lambda \epsilon)}{n \hat{\alpha}} \right).
\end{eqnarray}
It can be verified that $ \mathfrak{T}_{\rm{eff}}$ has a global
maximum at $\lambda^{PAP}_{opt} \approx \frac{n \hat{\alpha}}{\log^2
(n \hat{\alpha} \epsilon^{-1})}$. In other words, for $\lambda <
\lambda^{PAP}_{opt}$, there is a tradeoff between the throughput and
delay, meaning that increasing $\lambda$ results in increasing both
the throughput and delay. However, the increase in the throughput is
logarithmic while the delay increases linearly with $\lambda$. It
should be noted that the region $\lambda> \lambda^{PAP}_{opt}$ is
not of interest, since increasing $\lambda$ from
$\lambda^{PAP}_{opt}$ results in decreasing the throughput and
increasing the delay which is not desired.

\subsubsection{BAP} Due to the similarity between the values of $\mathbb{P}\left\{\mathscr{B}_{i}\right\}$
and $\Delta_n$ for the PAP and the BAP, the results obtained for the PAP are also valid for the BAP.

\subsubsection{CAP} Using (\ref{CAP001}) and (\ref{eq001}), it follows that for a given $\lambda$ and
$\epsilon \ll 1$, we have
\begin{eqnarray}
 q_n &\approx& \frac{\log (\epsilon^{-1})}{\lambda}, \notag\\
\Longrightarrow \tau_n &\approx& \log \left(\frac{\lambda}{ \log (\epsilon^{-1})} \right),
\end{eqnarray}
and
\begin{eqnarray}
q_n \Delta_n &\approx& \frac{1}{\lambda}.
\end{eqnarray}
As can be observed, all the results for the cases of PAP and BAP are
extendable to the case of CAP by substituting $\epsilon^{-1}$ with
$\log (\epsilon^{-1})$. In particular, the optimum value for
$\lambda$ can be written as $\lambda^{CAP}_{opt} \approx \frac{n
\hat{\alpha}}{\log^2 \left(n \hat{\alpha} \log (\epsilon^{-1})
\right)}$, and for $\lambda < \lambda^{CAP}_{opt}$, the effective
throughput of the network can be given as $\mathfrak{T}_{\rm{eff}}
\approx \frac{1}{\hat{\alpha}} \log \left( \frac{\lambda}{\log
(\epsilon^{-1})}\right)$. This means that in the region $\lambda <
\lambda^{CAP}_{opt}$, which is the region of interest, there is a
tradeoff between the throughput and delay such that by increasing
$\lambda$, $\mathfrak{T}_{\rm{eff}}$ increases logarithmically,
while the delay increases linearly with $\lambda$. Furthermore,
comparing the value of $\lambda_{opt}$ for the PAP and BAP with the
CAP, it is realized that $\lambda^{CAP}_{opt} > \lambda^{PAP}_{opt}$
and $\lambda^{CAP}_{opt} > \lambda^{BAP}_{opt}$. This fact is also
observed in the simulations.

\section{Numerical Results}\label{numerical_Ch4}
In this section, we present some numerical results to evaluate the
tradeoff between the effective throughput of the network and other
performance measures, i.e., dropping probability and the delay-bound
($\lambda$) for different packet arrival processes. For this
purpose, we assume that all users in the network follow the
threshold-based on-off power allocation policy. In addition, the
shadowing effect is assumed to be \textit{lognormal} distributed
with mean $\varpi=0.5$, variance $1$ and $\alpha=0.4$. Furthermore,
we assume that $n=500$ and $N_{0} = 1$.

Figures \ref{fig: delay-throuPAP} and \ref{fig: delay-throuCAP} show
the effective throughput of the network versus $\lambda_{\epsilon}$
for the PAP, BAP and CAP and different values of $\epsilon$. It is
observed from these figures that for a given constraint on the
dropping probability (e.g., $\epsilon=0.05$), and for $\lambda <
\lambda_{opt}$, increasing $\lambda$ results in increasing both the
throughput and delay. However, the increase in the throughput is
logarithmic while the delay increases linearly with $\lambda$ as
expected. Also, increasing $\lambda$ from $\lambda_{opt}$ results in
decreasing the throughput and increasing the delay which is not
desired. Furthermore, comparing the value of $\lambda_{opt}$ for the
PAP and BAP with the CAP, it is realized that $\lambda^{CAP}_{opt} >
\lambda^{PAP}_{opt}$ and $\lambda^{CAP}_{opt} >
\lambda^{BAP}_{opt}$, as expected.
\begin{figure}[bhpt]
\centerline{\psfig{figure=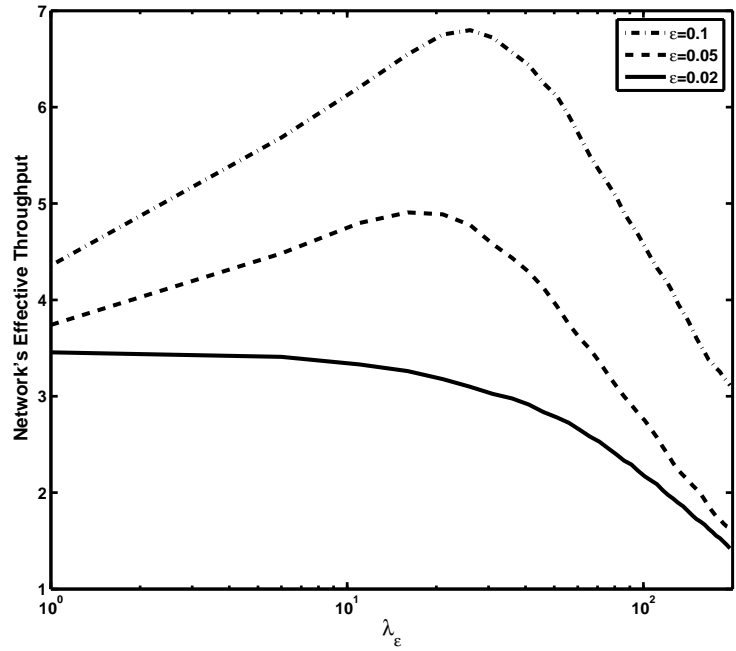,width=4.6in}}
\vspace{-7pt} \center{\hspace{16pt} \small{(a)}} \vspace{10pt}
\hspace{1pt} \centerline{\psfig{figure=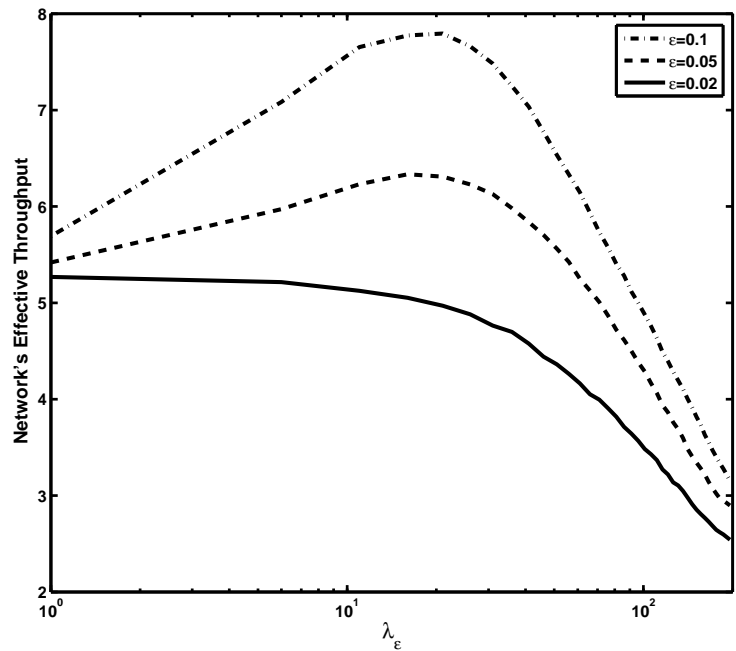,width=4.6in}}
\vspace{-35pt}
\center{\hspace{14pt} \small{(b)}} \\
\vspace{-7pt} \caption[a) $\hat{\alpha}=W=1$.] {\small{Effective
throughput of the network versus $\lambda_{\epsilon}$ for $N_{0} =
1$, $n=500$, $\alpha=0.4$, and different values of $\epsilon$ a) PAP
and b) BAP.}} \label{fig: delay-throuPAP}
\end{figure}

\begin{figure}[t]
\centerline{\psfig{figure=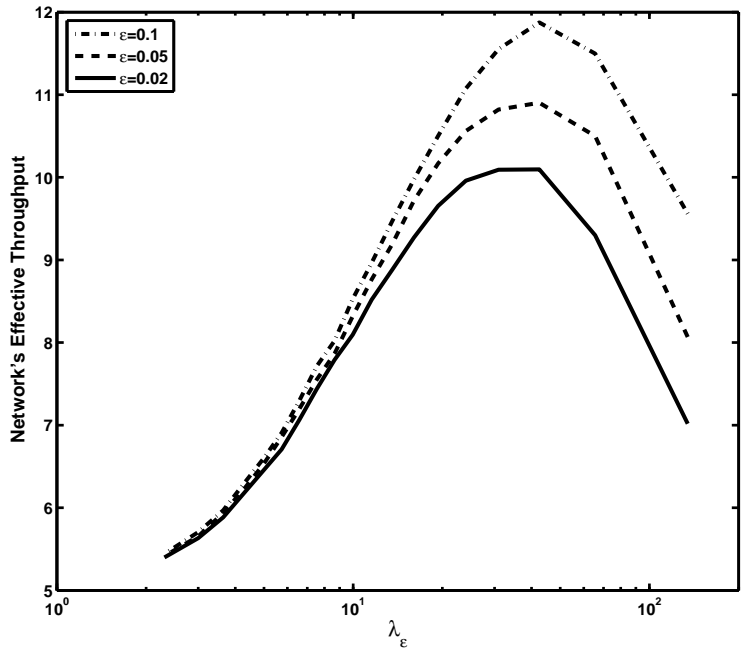,width=4.6in}}
\caption{Effective throughput of the network versus
$\lambda_{\epsilon}$ for the CAP and $N_{0} = 1$, $n=500$,
$\alpha=0.4$, and different values of $\epsilon$.} \label{fig:
delay-throuCAP}
\end{figure}

\begin{figure}[t]
\centerline{\psfig{figure=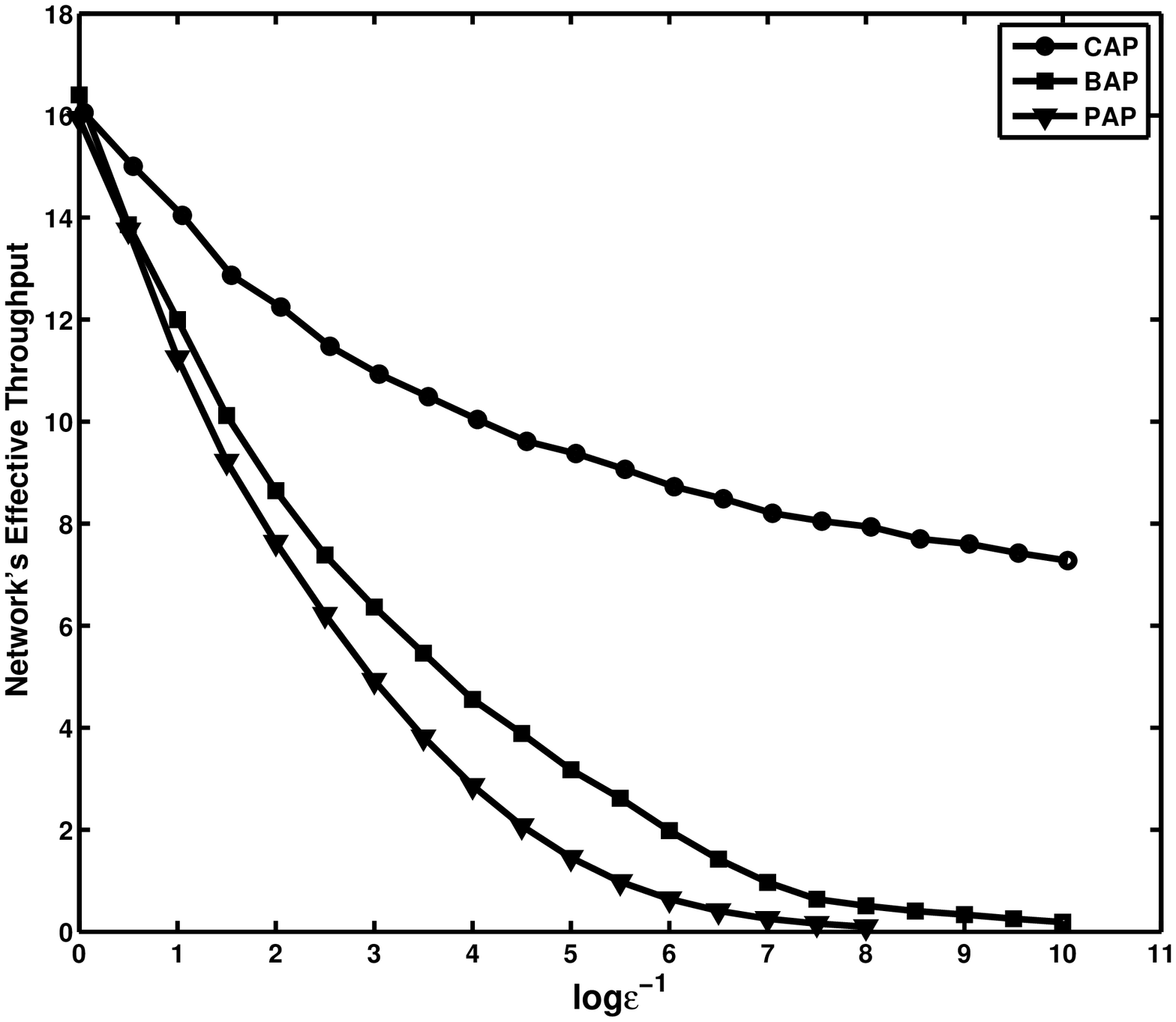,width=4.6in}}
\caption{Effective throughput of the network versus $\log
\epsilon^{-1}$ for different packet arrival processes and $N_{0} =
1$, $n=500$, $\alpha=0.4$.} \label{fig: D-T-Tradeoff}
\end{figure}

To evaluate the degradation in the effective throughput of the
network in terms of dropping probability, we plot
$\mathfrak{T}_{\rm{eff}}$ versus $\log \epsilon^{-1}$ for different
packet arrival processes in Fig. \ref {fig: D-T-Tradeoff}. It can be
seen that the degradation in the throughput in the cases of the PAP
and BAP is much more substantial compared to the CAP, as expected.
Hence, the performance of the underlying network with the CAP is
better than that of the PAP and BAP from the delay-throughput and
delay-dropping probability tradeoff points of view.

\section{Conclusion}\label{conclusion1}
In this paper, the delay-throughput of a single-hop wireless network
with $n$ links was studied. We considered a block Rayleigh fading
model with shadowing, described by parameters $(\alpha,\varpi)$, for
the channels in the network. The analysis in the paper relied on the
distributed \textit{on-off power allocation strategy} for the
deterministic and stochastic packet arrival processes. It was also
assumed that each transmitter has a buffer size of one packet and
dropping occurs once a packet arrives in the buffer while the
previous packet has not been served. In the first part of the paper,
we defined a new notion of performance in the network, called
\textit{effective throughput}, which captures the effect of arrival
process in the network throughput, and maximize it for different
cases of arrival process. It was proved that the effective
throughput of the network asymptotically scales as $\frac{\log
n}{\hat{\alpha}}$, with $\hat{\alpha} \triangleq \alpha \varpi$,
regardless of the packet arrival process. In the second part of the
paper, we presented the delay characteristics of the underlying
network in terms of the packet dropping probability. We derived the
sufficient conditions in the asymptotic case of $n \to \infty$ such
that the packet dropping probability tend to zero, while achieving
the maximum effective throughput of the network. Finally, we studied
the trade-off between the effective throughput, delay, and packet
dropping probability of the network for different packet arrival
processes. It was shown from the numerical results that the
performance of the deterministic packet arrival process is better
than that of the Poisson and the Bernoulli packet arrival processes,
from the delay-throughput and throughput-dropping probability
tradeoff points of view.

\appendices

\section{Proof of Lemma \ref{lemma000_Ch5}}\label{append01_a}

\renewcommand{\theequation}{A-\arabic{equation}}
\setcounter{equation}{0}

Let us define $\chi^{(t)}_{j} \triangleq \mathcal{L}^{(t)}_{ji}p^{(t)}_{j}$, where $\mathcal{L}^{(t)}_{ji}$
is independent of $p^{(t)}_{j}$, for $j \neq i$. Note that
\begin{eqnarray}
\mathbb{P} \left \lbrace p^{(t)}_{j}=1 \right \rbrace &=& \mathbb{P}\left \lbrace h^{(t)}_{jj}>\tau_{n},~ \mathscr{C}_{j}^{(t)}\right \rbrace\\
&\stackrel{(a)}{=} & q_{n}\Delta_{n},
\end{eqnarray}
where $(a)$ follows from (\ref{index02}). Thus for the on-off power
scheme, we have
\begin{equation}
\mathbb{E} \left[p^{(t)}_{j}\right]=q_{n}\Delta_{n}.
\end{equation}
Under a quasi-static Rayleigh fading channel model, it is concluded
that $\chi^{(t)}_{j}$'s are independent and identically distributed
(i.i.d.) random variables with
\begin{eqnarray}
\mathbb{E}\left[ \chi^{(t)}_{j} \right] & = & \mathbb{E}\left[ \mathcal{L}^{(t)}_{ji}p^{(t)}_{j} \right]=\hat{\alpha} q_{n}\Delta_{n}, \\
\textrm{Var} \left[ \chi^{(t)}_{j} \right] & = & \mathbb{E}\left[ \left(\chi^{(t)}_{j}\right)^{2} \right]-\mathbb{E}^{2}\left[ \chi^{(t)}_{j} \right]\\
&\stackrel{(a)}{\leq} & 2 \alpha \kappa q_{n}\Delta_{n}-(\hat{\alpha} q_{n}\Delta_{n})^{2},
\end{eqnarray}
where $\mathbb{E}\left[\left(h^{(t)}_{ji}\right)^{2} \right]=2$,
$\mathbb{E}\left[\left(\beta^{(t)}_{ji}\right)^{2} \right]
\triangleq \kappa$ and $\hat{\alpha} \triangleq \alpha \varpi$.
Also, $(a)$ follows from the fact that $\left(p^{(t)}_{j}\right)^{2}
\leq p^{(t)}_{j}$. Thus,
$\mathbb{E}\left[\left(p^{(t)}_{j}\right)^{2}\right] \leq
\mathbb{E}\left[p^{(t)}_{j}\right]=q_{n}\Delta_{n}$. The
interference $I^{(t)}_{i} = \sum^{n}_{\substack{j=1\\ j\neq
i}}\chi^{(t)}_{j}$ is a random variable with mean $\mu_{n}$ and
variance $\vartheta_{n}^{2}$, where
\begin{eqnarray}
\mu_{n} & \triangleq & \mathbb{E}\left[ I^{(t)}_i \right]=(n-1)\hat{\alpha} q_{n}\Delta_{n},\\
\vartheta_{n}^{2} & \triangleq & \textrm{Var} \left[I^{(t)}_i \right] \leq (n-1)(2\alpha \kappa q_{n}\Delta_{n}-(\hat{\alpha}q_{n}\Delta_{n})^{2}) \leq (n-1)(2\alpha \kappa q_{n}\Delta_{n}). \label{kir2}
\end{eqnarray}

\section{Proof of Lemma \ref{lemma001}}\label{append01}

\renewcommand{\theequation}{B-\arabic{equation}}
\setcounter{equation}{0}

Using Lemma \ref{lemma000_Ch5} and the \textit{Central Limit
Theorem} \cite[p. 183]{Petrovbook95}, we obtain
\begin{eqnarray}
\mathbb{P} \left \lbrace \vert I^{(t)}_{i}-\mu_{n} \vert <\psi_{n} \right \rbrace  &\approx&  1-Q \left(\dfrac{\psi_{n}}{\vartheta_{n}} \right) \\
&\stackrel{(a)}{\geq}& 1-e^{-\frac{\psi^2_{n}}{2 \vartheta^2_{n}}},
\end{eqnarray}
for all $\psi_{n} >0$ such that $\psi_n = o \left( n^{\frac{1}{6}}
\vartheta_n\right)$. In the above equations, the $Q(.)$ function is
defined as $Q(x)\triangleq
\frac{1}{\sqrt{2\pi}}\int_{x}^{\infty}e^{-u^{2}/2}du$, and $(a)$
follows from the fact that $Q(x) \leq e^{-\frac{x^2}{2}}$, $\forall
x> 0$. Selecting $\psi_n = \left( n
q_n\Delta_{n}\right)^{\frac{1}{8}} \sqrt{2} \vartheta_n$, we obtain
\begin{eqnarray}
 \mathbb{P} \lbrace \vert I^{(t)}_{i}-\mu_{n} \vert <\psi_{n} \rbrace &\geq& 1-e^{-\left( n q_n \Delta_n \right)^{\frac{1}{4}}}.
\end{eqnarray}
Therefore, defining $\varepsilon \triangleq \frac{\psi_n}{\mu_n}$, noting that as $\vartheta_n = O (nq_n \Delta_n)$ (from (\ref{kir2}) in Appendix I) and $\mu_n = \Theta (n q_n \Delta_n)$, we have $\varepsilon = O \left( (nq_n\Delta_n)^{-\frac{3}{8}} \right)$, it reveals that
\begin{eqnarray}
\mathbb{P} \lbrace \mu_n \left( 1-\varepsilon\right) \leq I^{(t)}_i \leq \mu_n \left( 1 + \varepsilon\right)\rbrace &\geq& 1-e^{-\left(n q_n\Delta_n\right)^{\frac{1}{4}}}.
\end{eqnarray}
Noting that $n q_n \Delta_n \to \infty$, it follows that $I^{(t)}_i
\sim \mu_n$, with probability one.

\section{Proof of Theorem \ref{th2}}\label{append02}

\renewcommand{\theequation}{C-\arabic{equation}}
\setcounter{equation}{0} Taking the first-order derivative of
(\ref{opio}) with respect to $\tau_{n}$ yields
\begin{eqnarray}
\dfrac{\partial \mathfrak{T}_{\mathrm{eff}}}{\partial \tau_{n}}&\stackrel{(a)}{=}&n q_{n} \left[\dfrac{\partial \Delta_{n}}{\partial \tau_{n}}-\Delta_{n}  \right]\log \left( 1+\dfrac{\tau_n}{n \hat{\alpha} q_{n}\Delta_{n}}\right)+nq_{n}\dfrac{(1+\tau_{n})\Delta_{n}-\tau_{n}\frac{\partial \Delta_{n}}{\partial \tau_{n}}}{n\hat{\alpha} q_{n}\Delta_{n}+\tau_{n}}\\
\label{deriv1}&\stackrel{(b)}{\approx}& n q_{n} \left[\dfrac{\partial \Delta_{n}}{\partial \tau_{n}}-\Delta_{n}  \right]\dfrac{\tau_n}{n \hat{\alpha} q_{n}\Delta_{n}}+nq_{n}\dfrac{(1+\tau_{n})\Delta_{n}-\tau_{n}\frac{\partial \Delta_{n}}{\partial \tau_{n}}}{n\hat{\alpha} q_{n}\Delta_{n}+\tau_{n}},
\end{eqnarray}
where $(a)$ comes from $q_n=e^{-\tau_n}$ and $\frac{\partial
q_{n}}{\partial \tau_{n}}=-q_{n}$. Also, $(b)$ follows from Lemma
\ref{lemma006} and using the approximation $\log(1+x) \approx x$,
for $x \ll 1$. Setting (\ref{deriv1}) equal to zero yields
\begin{equation}\label{result01}
n \hat{\alpha}q_{n} \Delta_{n}^{2}=\left(\Delta_{n}-\dfrac{\partial \Delta_{n}}{\partial \tau_{n}} \right)\tau_{n}^{2}.
\end{equation}
It should be noted that (\ref{result01}) is valid for every packet
arrival process. Recalling from (\ref{PAP001}), the full buffer
probability for the PAP is given by
\begin{eqnarray}
\Delta^{PAP}_{n} &=&\dfrac{1}{1+\lambda \log (1-q_{n})^{-1}} \\
&\stackrel{(a)}{\approx}& \frac{1}{1+\lambda q_{n}}, \label{kir1}
\end{eqnarray}
where $(a)$ follows from the fact that for $q_{n}=o(1)$,
$\log(1-q_{n})^{-1} \approx q_{n}$. In this case, $\frac{\partial
\Delta^{PAP}_{n}}{\partial \tau_{n}}=\frac{\partial
\Delta^{PAP}_{n}}{\partial q_{n}}\frac{\partial q_{n}}{\partial
\tau_{n}}=\frac{\lambda q_{n}}{\left(1+\lambda q_{n} \right)^2}$,
which results in
\begin{equation}
\Delta^{PAP}_{n}-\dfrac{\partial \Delta^{PAP}_{n}}{\partial \tau_{n}} \approx \dfrac{1}{\left(1+\lambda q_{n} \right)^2}=\left(\Delta_{n}^{PAP}\right)^2.
\end{equation}
Thus for the Poisson arrival process, (\ref{result01}) can be
simplified as
\begin{equation}\label{eqnPAP01}
n \hat{\alpha}q_{n} = \tau_{n}^2.
\end{equation}
It can be verified that the solution for (\ref{eqnPAP01}) is
\begin{equation}\label{eqnPAP02}
\tau_{n}^{PAP}=\log n-2\log \log n +O(1).
\end{equation}
Using $q_{n}=e^{-\tau_{n}}$, we conclude that
\begin{equation}\label{eqn: PAP15}
q_{n}^{PAP} = \delta \dfrac{ \log ^{2} n }{n},
\end{equation}
for some constant $\delta$.

To satisfy the condition of lemma \ref{lemma006}, we should have
\begin{equation}
 \frac{\tau_n}{n \hat{\alpha} q_n \Delta^{PAP}_{n}} \ll 1,
\end{equation}
Using (\ref{kir1}), (\ref{eqnPAP02}), and (\ref{eqn: PAP15}), it yields
\begin{equation}\label{Ch4: landaPAP1}
 \lambda^{PAP}=o\left(\dfrac{n}{\log n} \right).
\end{equation}
Thus, the maximum effective throughput of the network obtained in
(\ref{opio}) can be written as
\begin{equation}
\mathfrak{T}_{\mathrm{eff}} \approx \dfrac{\tau_{n}}{\hat{\alpha}}.
\end{equation}

\section{Proof of Theorem \ref{th3}}\label{append03}

\renewcommand{\theequation}{D-\arabic{equation}}
\setcounter{equation}{0}

Using (\ref{BAP001}), we have
$\frac{\partial \Delta^{BAP}_{n}}{\partial \tau_{n}}=\frac{\partial \Delta^{BAP}_{n}}{\partial q_{n}}\frac{\partial q_{n}}{\partial \tau_{n}}
=-q_{n}\frac{\partial \Delta^{BAP}_{n}}{\partial q_{n}}=\frac{q_{n}(\lambda-1)}{\left(1+(\lambda-1)q_{n} \right)^2}$.
In this case,
\begin{equation}
\Delta^{BAP}_{n}-\dfrac{\partial \Delta^{BAP}_{n}}{\partial \tau_{n}}=\dfrac{1}{\left(1+(\lambda-1)q_{n} \right)^2}=\left(\Delta_{n}^{BAP}\right)^2.
\end{equation}
Thus for the Bernoulli arrival process, (\ref{result01}) can be
simplified as
\begin{equation}\label{eqnBAP01}
n \hat{\alpha}q_{n} = \tau_{n}^2.
\end{equation}
It can be observed that (\ref{eqnBAP01}) is exactly equal to
(\ref{eqnPAP01}) and hence, its solution can be written as
\begin{equation}\label{eqnBAP02}
\tau_{n}^{BAP}=\log n-2\log \log n +O(1),
\end{equation}
and
\begin{equation}\label{eqn: BAP15}
q_{n}^{BAP} = \delta \dfrac{ \log ^{2} n }{n},
\end{equation}
for some constants $\delta$. Similarly, the maximum effective
throughput of the network for the BAP is obtained as
\begin{equation}
\mathfrak{T}_{\mathrm{eff}} \approx \dfrac{\tau_{n}}{\hat{\alpha}},
\end{equation}
which is achieved under the condition
\begin{equation}\label{ch4: landaBAP}
 \lambda^{BAP}=o\left(\dfrac{n}{\log n} \right).
\end{equation}

\section{Proof of Theorem \ref{th4}}\label{append04}

\renewcommand{\theequation}{E-\arabic{equation}}
\setcounter{equation}{0}

Using (\ref{CAP001}), we have
\begin{eqnarray}
\dfrac{\partial \Delta^{CAP}_{n}}{\partial \tau_{n}}&=&\dfrac{\partial \Delta^{CAP}_{n}}{\partial q_{n}}\dfrac{\partial q_{n}}{\partial \tau_{n}}\\
&=&-q_{n}\dfrac{\partial \Delta^{CAP}_{n}}{\partial q_{n}}\\
&=&\dfrac{1-(1-q_{n})^{\lambda}}{\lambda q_{n}}-(1-q_{n})^{\lambda-1}\\
&=&\Delta^{CAP}_{n}-(1-q_{n})^{\lambda-1}.
\end{eqnarray}
Hence, $\Delta^{CAP}_{n}-\frac{\partial \Delta^{CAP}_{n}}{\partial \tau_{n}}=(1-q_{n})^{\lambda-1}$.
In this case, (\ref{result01}) can be simplifies as
\begin{equation}\label{result02}
n \hat{\alpha}q_{n} \dfrac{\left[1-(1-q_{n})^{\lambda}\right]^2}{\left(\lambda q_{n}\right)^2}=(1-q_{n})^{\lambda-1}\tau_{n}^{2}.
\end{equation}
or
\begin{equation}\label{result03}
n \hat{\alpha}=\dfrac{\tau_{n}^{2}\lambda^2 q_{n}(1-q_{n})^{\lambda-1}}{\left[1-(1-q_{n})^{\lambda}\right]^2}.
\end{equation}

Since $q_{n} =o(1)$, we have
$(1-q_{n})^{\lambda-1}=e^{(\lambda-1)\log(1-q_{n})}
\stackrel{(a)}{\approx} e^{-\lambda q_{n}}$, and
$1-(1-q_{n})^{\lambda} \stackrel{(b)}{\approx} 1-e^{-\lambda
q_{n}}$. It should be noted that $(a)$ and $(b)$ are valid under the
condition $\frac{\lambda q_{n}^{2}}{2}=o(1)$ \footnote{As we will
show the condition $\frac{\lambda q_{n}^{2}}{2}=o(1)$ is satisfied
for the optimum $q_{n}$ and the corresponding $\lambda$.}. Thus,
(\ref{result03}) can be simplified as
\begin{equation}
n \hat{\alpha}=\dfrac{\tau_{n}^{2}\lambda^2 q_{n}e^{-\lambda q_{n}}}{\left[1-e^{-\lambda q_{n}}\right]^2},
\end{equation}
or
\begin{equation}\label{eqn: Psi1}
\dfrac{\nu \log \nu^{-1}}{(1-\nu)^{2}}=\Psi,
\end{equation}
where $\nu \triangleq e^{-\lambda q_{n}}$ and $\Psi \triangleq \dfrac{n \hat{\alpha}}{\tau_{n}^{2} \lambda}$.
For this setup, we have the following cases:

\textbf{Case 1:} $\Psi \gg 1$

It is realized from (\ref{eqn: Psi1}) that for $\Psi \gg 1$,
$\nu=1-\epsilon$, where $\epsilon=o(1)$. Thus, (\ref{eqn: Psi1}) can
be simplified as
\begin{eqnarray}
\Psi &\approx& \dfrac{\log(1-\epsilon)^{-1}}{\epsilon^2}\\
&\stackrel{(a)}{\approx}&\dfrac{\epsilon}{\epsilon^2}\\
&=&\dfrac{1}{\epsilon},
\end{eqnarray}
where $(a)$ follows from the Taylor series expansion
$\log(1-z)=-\sum_{k=1}^{\infty} \dfrac{z^{k}}{k} \approx -z,~~ \vert
z \vert \ll 1$. Since $\nu \triangleq e^{-\lambda q_{n}}$ and
$\nu=1-\epsilon$, we have
\begin{eqnarray}
e^{-\lambda q_{n}}&=&1-\dfrac{1}{\Psi},\\
\Longrightarrow \label{result04}q_{n} &\stackrel{(a)}{\approx}& \dfrac{1}{\Psi \lambda}=\dfrac{\tau_{n}^{2}}{n \hat{\alpha}},
\end{eqnarray}
where $(a)$ follows from the fact that as $\lambda q_n =o(1)$, we
have $e^{-\lambda q_n} \approx 1-\lambda q_n$. It can be verified
that the solution for (\ref{result04}) is
\begin{equation}\label{eqnCAP01}
\tau_{n}^{CAP}=\log n-2\log \log n +O(1).
\end{equation}
Using $q_{n}=e^{-\tau_{n}}$, we conclude that
\begin{equation}\label{eqn: 15}
q_{n}^{CAP} = \delta \dfrac{ \log ^{2} n }{n},
\end{equation}
for some constant  $\delta$.

The above results are valid for $\Psi \triangleq \frac{n
\hat{\alpha}}{\tau_{n}^{2} \lambda} \gg1$ or $\lambda
=o\left(\frac{n}{\log^{2} n} \right)$. Also, it can be verified that
$\frac{\lambda q_{n}^{2}}{2}=o(1)$, and therefore the
approximations $(1-q_{n})^{\lambda-1} \approx e^{-\lambda q_{n}}$
and $1-(1-q_{n})^{\lambda} \approx 1-e^{-\lambda q_{n}}$ are valid
in this region.

To satisfy the condition of Lemma \ref{lemma006}, we must have
\begin{equation}\label{condi01}
 \dfrac{\tau_n}{n \hat{\alpha} q^{CAP}_n \Delta^{CAP}_{n}} \ll 1.
\end{equation}
From (\ref{CAP001}), (\ref{eqnCAP01}) and noting that as $\lambda
=o\left(\frac{n}{\log^{2} n} \right)$, $\left[ 1-(1-q_{n})^{\lambda}
\right] \approx 1-e^{-\lambda q_n} \approx \lambda q_n$, we can
write
\begin{eqnarray}
\dfrac{\tau_n}{n \hat{\alpha} q^{CAP}_n \Delta^{CAP}_{n}} &\approx& \dfrac{\lambda \log n}{n \hat{\alpha}\left[ 1-(1-q_{n})^{\lambda} \right]}\\
&\approx& \frac{\log n}{n \hat{\alpha} q_n} \notag\\
&=& O \left(\frac{1}{\log n} \right),
\end{eqnarray}
which means that the condition of Lemma \ref{lemma006} is
automatically satisfied in this region. Thus, the maximum effective
throughput of the network obtained in (\ref{opio}) can be simplified
as
\begin{equation}
\mathfrak{T}_{\mathrm{eff}} \approx \dfrac{\tau_{n}}{\hat{\alpha}} \approx \frac{\log n}{\hat{\alpha}}.
\end{equation}

\textbf{Case 2:} $\Psi =\Theta(1)$

From (\ref{eqn: Psi1}) which gives $\frac{\nu \log
\nu^{-1}}{(1-\nu)^{2}}=\Psi=\Theta(1)$, we conclude that $\nu
\triangleq e^{-\lambda q_{n}}=\Theta(1)$. Thus,
\begin{eqnarray}
q_{n}&=&\dfrac{c_{1}}{\lambda}\\
\label{result05}&\stackrel{(a)}{=}& \dfrac{c_{2}\tau_{n}^{2}}{n \hat{\alpha}}
\end{eqnarray}
where $c_{1}$ and $c_{2}$ are constants and $(a)$ follows from $\Psi
\triangleq \frac{n \hat{\alpha}}{\tau_{n}^{2} \lambda} =\Theta(1)$.
It can be verified that the solution for (\ref{result05}) is
\begin{eqnarray}
\label{eqnCAP02}\tau_{n}^{CAP}&=&\log n-2\log \log n +O(1).\\
\label{eqn: 15_1}q_{n}^{CAP} &=& \delta' \dfrac{ \log ^{2} n }{n},
\end{eqnarray}
for some constant  $\delta'$.

The above results are valid for $\Psi \triangleq \frac{n \hat{\alpha}}{\tau_{n}^{2} \lambda} =\Theta(1)$ or
$\lambda =\Theta\left(\frac{n}{\log^{2} n} \right)$. Also, it can be verified that $\frac{\lambda q_{n}^{2}}{2}=o(1)$,
and therefore, the approximations $(1-q_{n})^{\lambda-1} \approx e^{-\lambda q_{n}}$ and
$1-(1-q_{n})^{\lambda} \approx 1-e^{-\lambda q_{n}}$ are valid in this region.

Similar to the argument in Case 1, the  condition of Lemma
\ref{lemma006} is satisfied, and therefore, the maximum effective
throughput of the network is obtained as
\begin{equation}
\mathfrak{T}_{\mathrm{eff}} \approx \dfrac{\tau_{n}}{\hat{\alpha}} \approx \frac{\log n}{\hat{\alpha}}.
\end{equation}

\textbf{Case 3:} $\Psi \ll 1$

It is concluded from (\ref{eqn: Psi1}) that $\frac{\nu \log
\nu^{-1}}{(1-\nu)^{2}}=\Psi$, where $\Psi=o(1)$. In this case,
$\nu=o(1)$, and therefore, $\nu \log \nu^{-1} \approx \Psi$. The
solution for this equation is $\nu \approx \frac{\Psi}{\log
(\Psi)^{-1}}$. In other words,
\begin{equation}\label{result07}
e^{-\lambda q_{n}} \approx \dfrac{\frac{n \hat{\alpha}}{\lambda \tau_{n}^{2}}}{\log \left(\frac{\lambda \tau_{n}^{2}}{n \hat{\alpha}} \right)}.
\end{equation}
Thus,
\begin{eqnarray}
\lambda q_{n} &\approx& \log \left(\frac{\lambda \tau_{n}^{2}}{n \hat{\alpha}} \right)+\log \log \left(\frac{\lambda \tau_{n}^{2}}{n \hat{\alpha}} \right)\\
\label{result06}& \stackrel{(a)}{\approx} & \log \left(\frac{\lambda \tau_{n}^{2}}{n \hat{\alpha}} \right),
\end{eqnarray}
where $(a)$ follows from $\lambda q_n = \omega (1)$ which comes from
$\nu =o(1)$. The solution for the above equation can be written as
$\tau_{n}=\log \lambda -f(\lambda)$ or
$q_{n}=\frac{e^{f(\lambda)}}{\lambda}=o(1)$, where we assume
$f(\lambda)=o(\log \lambda)$.  Substituting in (\ref{result06}), we
obtain
\begin{eqnarray}
e^{f(\lambda)}&=&\log \left(\frac{\lambda (\log \lambda -f(\lambda))^{2}}{n \hat{\alpha}} \right)\\
&=&\log \left(\frac{\lambda \log^{2} \lambda}{n \hat{\alpha}} \right)+2\log \left( 1-\dfrac{f(\lambda)}{\log \lambda} \right)\\
&\stackrel{(a)}{\approx}& \log \left(\frac{\lambda \log^{2} \lambda}{n \hat{\alpha}} \right),
\end{eqnarray}
where $(a)$ follows from the fact $f(\lambda)=o(\log \lambda)$.
Thus, using $\tau_{n}=\log \lambda -f(\lambda)$, it yields
\begin{equation}\label{result08}
\tau^{CAP}_{n}=\log \lambda -\log \log \left(\frac{\lambda \log^{2} \lambda}{n \hat{\alpha}} \right).
\end{equation}
It should be noted that (\ref{result08}) is derived from
(\ref{result07}) for $\Psi \triangleq \frac{n
\hat{\alpha}}{\tau_{n}^{2} \lambda} \ll 1$. This translates the
condition $\frac{n \hat{\alpha}}{\tau_{n}^{2} \lambda} \ll 1$ to
$\frac{n \hat{\alpha}}{\lambda \log^{2} \lambda } \ll 1$, which
incurs that $\lambda=\omega \left( \frac{n}{\log^{2}n} \right)$.

Also, in the following we show that the condition $\frac{\lambda
q_{n}^{2}}{2}=o(1)$ is satisfied. It follows from (\ref{result06})
that
\begin{eqnarray}
\lambda q^{2}_{n}&=&\dfrac{\log^{2} \left(\frac{\lambda \tau_{n}^{2}}{n \hat{\alpha}} \right)}{\lambda}\\
&\stackrel{(a)}{\leq}& \dfrac{\log^{2} \left(\frac{\lambda \log^{2} \lambda}{n \hat{\alpha}} \right)}{\lambda}\\
&\stackrel{(b)}{=}& o(1),
\end{eqnarray}
where $(a)$ follows from (\ref{result08}) and $(b)$ comes from
$\lambda=\omega \left( \frac{n}{\log^{2}n} \right)$.

To satisfy the condition of Lemma \ref{lemma006}, we must have
\begin{equation}\label{condi001}
 \dfrac{\tau_n}{n \hat{\alpha} q^{CAP}_n \Delta^{CAP}_{n}} \ll 1.
\end{equation}
From (\ref{CAP001}) and (\ref{result08}), we can write
\begin{eqnarray}
\dfrac{\tau_n}{n \hat{\alpha} q^{CAP}_n \Delta^{CAP}_{n}} &\approx& \dfrac{\lambda \log \lambda}{n \hat{\alpha}\left[ 1-e^{-\lambda q_{n}} \right]}\\
&\stackrel{(a)}{\approx}&\dfrac{\lambda \log \lambda}{n \hat{\alpha}},
\end{eqnarray}
where $(a)$ follows from $e^{-\lambda q_{n}}=o(1)$. In order to have
$\frac{\lambda \log \lambda}{n \hat{\alpha}}=o(1)$, one must have
$\lambda=o \left(\frac{n}{\log n} \right)$. In this case, the
maximum effective throughput of the network can be simplified as
\begin{equation}
\mathfrak{T}_{\mathrm{eff}} \approx \dfrac{\tau_{n}}{\hat{\alpha}} \approx \frac{\log \lambda}{\hat{\alpha}}.
\end{equation}
Noting that $\lambda$ satisfies $\lambda=\omega
\left(\frac{n}{\log^2 n} \right)$ and $\lambda=o \left(\frac{n}{\log
n} \right)$, it follows that $\log \lambda \sim \log n$. In other
words,
 $\mathfrak{T}_{\mathrm{eff}} \approx \frac{\log n}{\hat{\alpha}}$.

\section*{Acknowledgment}
The authors would like to thank V. Pourahmadi of CST Lab. for the
helpful discussions.


\end{document}